\documentclass[pdflatex,sn-mathphys-num]{sn-jnl}

\usepackage{graphicx}%
\usepackage{multirow}%
\usepackage{amsmath,amssymb,amsfonts}%
\usepackage{amsthm}%
\usepackage{mathrsfs}%
\usepackage[title]{appendix}%
\usepackage{xcolor}%
\usepackage{textcomp}%
\usepackage{manyfoot}%
\usepackage{booktabs}%
\usepackage{algorithm}%
\usepackage{algorithmicx}%
\usepackage{algpseudocode}%
\usepackage{listings}%

\theoremstyle{thmstyleone}%
\newtheorem{theorem}{Theorem}
\newtheorem{proposition}{Proposition}%
\newtheorem{corollary}{Corollary}
\theoremstyle{thmstyletwo}%
\newtheorem{remark}{Remark}%

\theoremstyle{thmstylethree}%

\begin{document}

\title[Approximation, Consistency, and Prediction in GNMMs]
{Approximation, Consistency, and Prediction in Generalized Neural Network Mixed Models for Longitudinal and Clustered Data}

\author*[1]{\fnm{Mahmud} \sur{Hasan}}\email{mnh5524@psu.edu}

\author[2]{\fnm{Md Moinul} \sur{Ahsan}}\email{ahsanm8@vcu.edu}

\author[3]{\fnm{Mathias Nthiani} \sur{Muia}}\email{mnmuia@southalabama.edu}

\author[4]{\fnm{Mous-Abou} \sur{Hamadou}}\email{mhamadou@quinnipiac.edu}
\author[5]{\fnm{Chathurika S} \sur{Abeykoon}}\email{abeykoonc@rhodes.edu}
\small{\affil*[1]{\orgdiv{Department of Mathematics and Statistics}, \orgname{Pennsylvania State University-Altoona}, \orgaddress{\city{} PA, USA}}

\affil[2]{\orgdiv{Department of Biostatistics}, \orgname{Virginia Commonwealth University}, \orgaddress{\city{Richmond}, VA, USA}}

\affil[3]{\orgdiv{Department of Mathematics and Statistics}, \orgname{University of South Alabama}, \orgaddress{\city{Mobile}, AL, USA}}

\affil[4]{\orgdiv{Department of Mathematics and Statistics}, \orgname{Quinnipiac University}, \orgaddress{\city{Hamden}, CT, USA}}

\affil[5]{\orgdiv{Department of Mathematics and Statistics}, \orgname{Rhodes College}, \orgaddress{\city{Memphis}, TN, USA}}}

\abstract{
Generalized neural network mixed models (GNMMs) combine neural networks and
mixed-effects models to analyze longitudinal and clustered data. Although
GNMMs have shown good predictive performance in practice, their statistical
properties are not well understood. Therefore, this paper studies several
important properties of GNMM and provides theoretical support for its use.
First, we show that the neural-network component can approximate an unknown
nonlinear fixed-effect function arbitrarily well and establish consistency
of the fitted neural-network function for its population target within a
given network class. Next, we establish consistency of random-effect
estimation under increasing within-subject information and show that the
GNMM predictor converges to the true predictor. We then prove consistency
of the conditional mean function and show that the excess mean squared
prediction error (MSPE) relative to the oracle prediction error converges
to zero. We conduct simulation studies
to examine nonlinear fixed-effect recovery, random-effect estimation,
predictor and conditional mean estimation, and excess prediction error under
different numbers of subjects and repeated observations. We also apply GNMM
to a longitudinal schizophrenia dataset to predict continuous and binary
depression outcomes using smartphone-based behavioral measurements and
compare its predictive performance with generalized linear mixed models and
conventional artificial neural networks. Both the theoretical and empirical
results demonstrate that GNMM provides a flexible framework for modeling
nonlinear longitudinal and clustered data. These findings provide a
statistical foundation for GNMM prediction and support future methodological
developments.
}

\keywords{Clustered data; Longitudinal data; Neural networks; Mixed-effects models; Predictor consistency; Random-effect consistency; Prediction-error consistency; Universal approximation.}


\maketitle
\section{Introduction}\label{sec1}

Longitudinal and clustered data commonly arise in medicine, public health,
psychology, economics, and many other scientific fields. This type of data
involves repeated measurements collected from the same subjects over time.
For example, mobile health (mHealth) studies of individuals with schizophrenia
collect depression, anxiety, mobility, communication, and behavioral
measurements using smartphones over time \citep{Torous2016}. Because these
measurements are collected from the same individuals, the resulting
observations are correlated within subjects. Ignoring this within-subject
correlation may lead to inefficient estimation and poor prediction performance
\citep{diggle2002analysis,mcculloch2008generalized}.

Classical statistical methods such as linear mixed models (LMMs)
\citep{laird1982random} and generalized linear mixed models (GLMMs)
\citep{breslow1993approximate,mcculloch2008generalized} are widely used for
analyzing longitudinal and clustered data. GLMMs extend generalized linear
models by incorporating subject-specific random effects to account for
within-subject correlation and heterogeneity \citep{verbeke1996heterogeneity}.
They have been successfully applied to continuous, binary, count, and other
non-Gaussian outcomes. However, standard GLMMs assume a linear fixed-effect
component, which may not adequately capture complex nonlinear relationships
between predictors and outcomes.

Several extensions have been proposed to address this limitation. Nonlinear
mixed-effects (NLME) models \citep{lindstrom1990nonlinear} provide greater
flexibility for modeling nonlinear trends, but their estimation procedures can
be computationally demanding and may not scale efficiently to high-dimensional
settings \citep{wortwein2023nme}. Similarly, generalized additive mixed models
(GAMMs) \citep{lin1999gamm} use smoothing splines to model nonlinear effects,
but they may have difficulty capturing complex interactions among predictors
\citep{wood2017generalized}.

Beyond these conventional nonlinear modeling approaches, deep learning
methods, particularly artificial neural networks (ANNs), provide a powerful
framework for modeling complex nonlinear relationships
\citep{goodfellow2016deep,lecun2015deep}. Feed-forward ANNs are universal
function approximators and can automatically learn nonlinear patterns from
data without requiring explicit model specification
\citep{cybenko1989approximation,hornik1989multilayer}. However, conventional ANNs assume that observations are independent, making
them unsuitable for longitudinal data. Neural-network approaches have been
explored for longitudinal data \citep{tandon2006neural}, while early extensions
that explicitly incorporated subject-specific information into ANNs
\citep{maity2013subject} often suffered from scalability issues as the number
of subjects increased \citep{mandel2023neural}.

To address this limitation, \citep{mandel2023neural} proposed the generalized
neural network mixed model (GNMM), which incorporates deep learning into the
generalized linear mixed model framework. Specifically, GNMM replaces the
linear fixed-effect component of a GLMM with a neural network while retaining
subject-specific random effects. This formulation enables GNMM to capture
complex nonlinear fixed-effect relationships while accounting for
within-subject correlation and heterogeneity.

Related neural mixed-effects approaches have increasingly been developed for
longitudinal and clustered data. GNMM has been applied to depression and
anxiety prediction using smartphone-based measurements from individuals with
schizophrenia \citep{mandel2023neural}. Related approaches combine neural
networks and random effects for flexible modeling of clustered data and
personalized nonlinear prediction \citep{tran2017random,tran2020deep,wortwein2023nme}. More
recently, GNMM and related neural mixed-effects models have been investigated
for longitudinal Parkinson's disease progression \citep{tong2026parkinson},
while neural generalized mixed-effects models have further extended this
framework through flexible neural representations and approximate
marginal-likelihood estimation \citep{slavutsky2026ngmm}.

Although these developments demonstrate the practical usefulness of neural
mixed-effects models, the theoretical properties of GNMM remain only
partially developed. In particular, \citep{mandel2023neural} derived a bound
comparing the prediction error of GNMM and LMM when the true fixed-effect
structure is linear, but general consistency results for nonlinear
fixed-effect recovery and subject-specific prediction remain limited.
Specifically, it remains unclear under general nonlinear fixed-effect
structures whether the neural-network class can approximate the true
fixed-effect function, whether the fitted neural-network function consistently
estimates its population target, whether subject-specific random effects can
be consistently estimated under increasing within-subject information,
whether the fitted predictor and conditional mean converge to their true
counterparts, and whether the resulting prediction error approaches the
corresponding oracle prediction error. Addressing these questions is essential
for establishing a statistical foundation for GNMM.

The primary objective of this paper is to develop a statistical foundation
for generalized neural network mixed models. We establish the approximation
property of the neural-network fixed-effect function and distinguish the
model-class approximation error from the estimation error associated with
fitting the neural-network parameters. We establish fixed-effect estimation
consistency within a given neural-network class, random-effect consistency
under increasing within-subject information, predictor consistency,
conditional mean consistency, and prediction-error consistency relative to
an oracle predictor. These results provide theoretical justification for
applying GNMM to longitudinal and clustered prediction problems.

We further investigate the finite-sample performance of GNMM through
simulation studies for both continuous and binary responses. The simulations
examine nonlinear fixed-effect recovery, random-effect estimation, predictor
and conditional mean estimation, and prediction performance as the number of
subjects and repeated observations per subject increase. Mean squared errors
are used to assess recovery of the fixed-effect function, random effects,
predictor, and conditional mean, together with the correlation between the
estimated and true random effects. Prediction performance is assessed using
mean squared prediction error (MSPE) and mean absolute error (MAE), with the
area under the receiver operating characteristic curve (AUC) additionally
considered for binary responses. We also compare the GNMM and oracle MSPEs
through the excess prediction error, providing a direct finite-sample
assessment of prediction-error consistency.

We also demonstrate the practical application of GNMM using a longitudinal
schizophrenia dataset collected from a smartphone-based mobile health
(mHealth) study \citep{Torous2016}. The analysis considers both continuous
and binary depression outcomes and compares GNMM with generalized linear
mixed models (GLMMs) and conventional artificial neural networks (ANNs).
For the continuous outcome, out-of-sample prediction is evaluated using
MSPE, root mean squared error (RMSE), and MAE, while the binary outcome is
evaluated using AUC, accuracy, sensitivity, and specificity. Since the true
fixed-effect function, random effects, predictor, and conditional mean are
unknown in the real data, these measures assess practical predictive
performance rather than directly evaluating the theoretical consistency
results.

The main contributions of this paper are threefold. First, we establish a
statistical foundation for GNMM through neural-network approximation,
fixed-effect and random-effect consistency, predictor and conditional mean
consistency, and prediction-error consistency relative to an oracle predictor.
Second, we provide finite-sample evidence corresponding to these theoretical
properties through simulation studies evaluating fixed-effect and random-effect
recovery, predictor and conditional mean estimation, and excess prediction
error. Third, we illustrate the practical applicability of GNMM using a
longitudinal schizophrenia dataset and compare its predictive performance
with GLMM and ANN for continuous and binary outcomes.

The remainder of the paper is organized as follows. Section~\ref{sec2}
introduces related models, followed by the GNMM formulation and notation.
Section~\ref{sec3} develops a statistical foundation for GNMM through
approximation, consistency, and prediction results.
Section~\ref{sec:simulation} presents the simulation design and evaluation
criteria, followed by an analysis of the simulation results. Section~\ref{sec5}
presents the real-data analysis using the longitudinal schizophrenia dataset.
Section~\ref{sec6} concludes the paper with a discussion and future research
directions.
\section{Model and Notations}
\label{sec2}

Generalized linear mixed models (GLMMs) provide a common framework for
analyzing clustered and longitudinal data by combining fixed effects and
subject-specific random effects
\cite{breslow1993approximate,mcculloch2008generalized}. Let
\[
\mathcal{D}
=
\{(Y_{ij},X_{ij},Z_{ij}): i=1,\ldots,m,\; j=1,\ldots,n_i\}
\]
denote a clustered dataset. Here, \(m\) is the number of subjects or
clusters, \(n_i\) is the number of observations for subject \(i\), and
\(N=\sum_{i=1}^{m} n_i\) is the total sample size. For subject \(i\) and
observation \(j\), \(Y_{ij}\) is the response variable,
\(X_{ij}\in\mathbb{R}^{p}\) is the vector of fixed-effect covariates, and
\(Z_{ij}\in\mathbb{R}^{q}\) is the vector of random-effect covariates. The
subject-specific random effect is denoted by \(b_i\in\mathbb{R}^{q}\). Let
\[
\mu_{ij}
=
E(Y_{ij}\mid X_{ij},Z_{ij},b_i)
\]
be the conditional mean of \(Y_{ij}\). A generalized linear mixed model is
written as
\begin{align}
g(\mu_{ij})
=
X_{ij}^{T}\beta
+
Z_{ij}^{T}b_i,
\end{align}
where \(g(\cdot)\) is a known link function,
\(\beta\in\mathbb{R}^{p}\) is the fixed-effect coefficient vector, and
\(b_i\sim N(0,D)\), with \(D\) a positive-definite covariance matrix. Artificial neural networks (ANNs) have been widely used to model complex
nonlinear relationships between predictors and responses because of their
universal approximation capability
\citep{hornik1989multilayer,goodfellow2016deep}. A standard feed-forward ANN
models the conditional mean as
\[
g(\mu_{ij})
=
f_{\theta}(X_{ij}),
\]
where \(f_{\theta}(\cdot)\) is a neural network with parameter vector
\(\theta\). Although ANNs can effectively capture nonlinear
predictor--response relationships, the standard ANN formulation does not
explicitly account for within-subject dependence or subject-specific
heterogeneity commonly present in longitudinal and clustered data. To accommodate nonlinear relationships between the covariates and the
response, \citep{mandel2023neural} proposed the generalized neural network
mixed model (GNMM), which replaces the linear fixed-effect component
\(X_{ij}^{T}\beta\) in the GLMM with a feed-forward neural network.
Specifically,
\begin{align}
g(\mu_{ij})
=
f_{\theta}(X_{ij})
+
Z_{ij}^{T}b_i,
\end{align}
where \(f_{\theta}(\cdot)\) is a feed-forward neural network with parameter
vector \(\theta\), \(g(\cdot)\) is a known link function, and
\(b_i\sim N(0,D)\) denotes the subject-specific random effects. Suppose the true data-generating model is
\begin{align}
g(\mu_{ij}^{0})
=
f_{0}(X_{ij})
+
Z_{ij}^{T}b_i,
\end{align}
where
\[
\mu_{ij}^{0}
=
E(Y_{ij}\mid X_{ij},Z_{ij},b_i),
\]
\(f_{0}:\mathcal{X}\rightarrow\mathbb{R}\) is an unknown nonlinear
fixed-effect function, and \(b_i\sim N(0,D)\), with \(D\) denoting the
covariance matrix of the random effects. Let
\begin{align}\label{Tpr}
\eta_{ij}^{0}
=
f_{0}(X_{ij})
+
Z_{ij}^{T}b_i
\end{align}
denote the true linear predictor and
\begin{align}\label{GPr}
\widehat{\eta}_{ij}
=
f_{\widehat{\theta}}(X_{ij})
+
Z_{ij}^{T}\widehat{b}_i
\end{align}
denote the fitted GNMM predictor. Here, \(\widehat{\theta}\) is the estimator
of the neural-network parameter vector \(\theta\) obtained from the GNMM
fitting procedure based on quasi-likelihood estimation
\citep{mandel2023neural}, and \(\widehat{b}_i\) is the estimated
subject-specific random effect for subject \(i\). Let
\[
\widehat{\mu}_{ij}
=
g^{-1}(\widehat{\eta}_{ij})
\]
denote the fitted conditional mean and
\[
\mu_{ij}^{0}
=
g^{-1}(\eta_{ij}^{0})
\]
denote the true conditional mean. Furthermore, let \(\|\cdot\|\) denote the
Euclidean norm and let \(o_p(1)\) denote a sequence of random quantities
converging to zero in probability under the asymptotic regime specified for
each result.

For \(K\geq1\), let
\[
\mathcal{F}_{K}
=
\{f_{\theta}:\theta\in\Theta_{K}\}
\]
denote a class of feed-forward neural networks, where \(K\) indexes the
network complexity and \(\Theta_K\) denotes the corresponding parameter
space. We consider a sequence of neural-network classes
\(\{\mathcal{F}_{K}\}_{K\geq1}\) satisfying the universal approximation
property
\citep{cybenko1989approximation,hornik1989multilayer}. Thus,
\(f_{\theta_K}\in\mathcal F_K\) denotes an approximating neural-network
function within a network class of complexity \(K\), whereas
\(f_{\widehat{\theta}}\) denotes the neural-network function estimated from
the observed data. This distinction separates neural-network approximation
from statistical estimation in the theoretical development that follows. We assume that the covariate space
\(\mathcal{X}\subset\mathbb{R}^{p}\) is compact, the function \(f_{0}\) is
continuous on \(\mathcal{X}\), the covariance matrix \(D\) is positive
definite, and
\[
\operatorname{Var}
(Y_{ij}\mid X_{ij},Z_{ij},b_i)
<
\infty.
\]
Additional regularity conditions required for individual consistency and
prediction-error results are stated with the corresponding results in
Section~\ref{sec3}.

\section{Statistical Properties of Generalized Neural Network Mixed Models}
\label{sec3}

Under the assumptions stated in Section~\ref{sec2}, we investigate the
statistical properties of GNMM by separating the main sources of error in
the fitted predictor. First, we study the approximation error of the
neural-network component and show that a sufficiently rich neural-network
class can approximate the unknown nonlinear fixed-effect function \(f_0\).
Second, we distinguish this approximation error from the estimation error
arising from fitting the neural-network parameters. Third, we study the
consistency of the estimated subject-specific random effects
\(\widehat b_i\) under increasing within-subject information. We then combine
these results to decompose the difference between the fitted predictor
\(\widehat{\eta}_{ij}\) and the true predictor \(\eta_{ij}^{0}\) into
neural-network estimation error, approximation error, and random-effect
estimation error. This decomposition leads to consistency of the fitted
predictor and conditional mean and provides an explicit characterization of
the excess prediction error relative to the oracle predictor. Together,
these results provide a statistical foundation for GNMM prediction in
longitudinal and clustered data.
\subsection{Approximation Capability of GNMM}\label{appr}

The approximation property establishes that the neural-network function
\(f_{\theta_K}\) of GNMM can approximate the unknown nonlinear fixed-effect
function \(f_{0}\), as shown in Theorem~\ref{Thm1}. The theorem applies the
classical universal approximation property to the GNMM framework by showing
that replacing the linear fixed-effect component with
\(f_{\theta_K}(X_{ij})\) enables the resulting GNMM predictor
\(\eta_{ij}^{K}\) to approximate the true predictor
\(\eta_{ij}^{0}\) while preserving the mixed-effects structure. This provides
the theoretical justification for using GNMM to model complex nonlinear
relationships in longitudinal and clustered data. This approximation result
concerns the representational capacity of the neural-network class; the
estimation error associated with the fitted neural-network function
\(f_{\widehat{\theta}}\) is considered separately in the subsequent results.

\begin{theorem}[Approximation Property]\label{Thm1}

Let \(\mathcal{X}\subset\mathbb{R}^{p}\) be compact and let
\(f_{0}:\mathcal{X}\rightarrow\mathbb{R}\) be continuous. Suppose that
the sequence of neural-network classes
\(\{\mathcal{F}_{K}\}_{K\geq 1}\) satisfies the universal approximation
property, where \(K\) indexes the network complexity. Then, for every
\(\varepsilon>0\), there exists a sufficiently large \(K\) and a
neural-network function \(f_{\theta_{K}}\in\mathcal{F}_{K}\) such that

\[
\sup_{i,j}
\left|
\eta_{ij}^{K}
-
\eta_{ij}^{0}
\right|
<
\varepsilon,
\]

where \(\eta_{ij}^{0}\) is the true linear predictor defined in
\eqref{Tpr}, and
\[
\eta_{ij}^{K}
=
f_{\theta_K}(X_{ij})
+
Z_{ij}^{T}b_i
\]
is the corresponding GNMM predictor obtained using
\(f_{\theta_K}\) and the same subject-specific random effect \(b_i\)
as in the true predictor.

\end{theorem}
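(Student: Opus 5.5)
The plan is to reduce the statement to the classical universal approximation property by cancelling the random-effect terms. By construction, \(\eta_{ij}^{K}\) and \(\eta_{ij}^{0}\) use the same covariate vector \(Z_{ij}\) and the same realized random effect \(b_i\). So for every \(i,j\),
\[
\eta_{ij}^{K}-\eta_{ij}^{0}
=
f_{\theta_K}(X_{ij})+Z_{ij}^{T}b_i-f_{0}(X_{ij})-Z_{ij}^{T}b_i
=
f_{\theta_K}(X_{ij})-f_{0}(X_{ij}).
\]
The mixed-effects structure therefore drops out identically, whatever the distribution of \(b_i\) and whatever the values of \(Z_{ij}\). No boundedness or moment conditions on \(Z_{ij}\) or \(b_i\) are needed. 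It is worth stating this explicitly, since the supremum is over all subjects and occasions, and \(b_i\sim N(0,D)\) is unbounded.

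Next, I bound the supremum over indices by a supremum over the covariate space. Each \(X_{ij}\in\mathcal{X}\), so
\[
\sup_{i,j}\bigl|\eta_{ij}^{K}-\eta_{ij}^{0}\bigr|
\le
\sup_{x\in\mathcal{X}}\bigl|f_{\theta_K}(x)-f_{0}(x)\bigr|.
\]
This bound is uniform in \(m\) and in \(n_1,\ldots,n_m\). It is also uniform in the realization of the data.

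Finally, I invoke the universal approximation property assumed for \(\{\mathcal{F}_K\}_{K\ge1}\), in the sense of \citep{cybenko1989approximation,hornik1989multilayer}. For compact \(\mathcal{X}\) and continuous \(f_0\), and for any \(\varepsilon>0\), there exist \(K\) and \(\theta_K\in\Theta_K\) with \(\sup_{x\in\mathcal{X}}|f_{\theta_K}(x)-f_0(x)|<\varepsilon\). To obtain the strict inequality, I apply the property with \(\varepsilon/2\) if necessary. Combining this with the previous display gives the claim. If the classes are nested, as when \(K\) is the number of hidden units, every larger \(K\) also works, which justifies the phrase ``sufficiently large \(K\).''

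There is no genuine obstacle. The only point requiring care is that the universal approximation property be read as uniform approximation on \(\mathcal{X}\), i.e. density of \(\bigcup_K\mathcal{F}_K\) in \(C(\mathcal{X})\) under the sup norm. An \(L^2\) or in-probability version would not control the supremum over \(i,j\). I will state this interpretation at the start of the proof.
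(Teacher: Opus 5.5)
Your proposal is correct and follows the paper's proof essentially step for step: cancel the common term $Z_{ij}^{T}b_i$, bound the index-wise supremum by $\sup_{x\in\mathcal X}|f_{\theta_K}(x)-f_0(x)|$, and invoke uniform universal approximation on the compact set $\mathcal X$. Your chain through the $\sup_x$ bound gives the strict inequality directly, so the $\varepsilon/2$ adjustment is unnecessary, and it is slightly cleaner than the paper's last step, which takes a supremum of pointwise strict bounds.
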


Hence, the GNMM predictor can approximate the true predictor arbitrarily
well as the neural-network class becomes sufficiently rich.

\begin{remark}
Theorem~\ref{Thm1} applies the classical universal approximation result to
the generalized neural network mixed-model framework. It shows that replacing
the linear fixed-effect component with a neural network preserves
approximation of the complete linear predictor while retaining the
mixed-effects structure. Since the same random-effect term
\(Z_{ij}^{T}b_i\) appears in both \(\eta_{ij}^{K}\) and
\(\eta_{ij}^{0}\), the approximation error in this result arises entirely
from approximating the unknown fixed-effect function \(f_0\) by
\(f_{\theta_K}\). The result therefore concerns the representational
capacity of the neural-network class and does not by itself establish
consistency of the estimated neural-network function
\(f_{\widehat{\theta}}\). The distinction between approximation error and
estimation error is considered separately in the subsequent development of
predictor consistency. In the simulation study, this result motivates the
use of nonlinear data-generating mechanisms, while in the real-data
application it provides theoretical motivation for allowing potentially
complex nonlinear predictor--response relationships.
\end{remark}
\subsection{Consistency of the Neural-Network Fixed-Effect Estimation}
\label{consistency of f}

Theorem~\ref{Thm1} establishes the approximation capability of the
neural-network class, but it does not by itself establish consistency of the
fitted neural-network function \(f_{\widehat{\theta}}\). We therefore
distinguish the approximation error from the estimation error associated
with fitting the neural-network parameters. For a given network class
\(\mathcal F_K\), let \(\theta_K^{*}\) denote the population target of the
GNMM estimation criterion, and let \(f_{\widehat{\theta}}\) denote the
neural-network function estimated from the observed data.
\begin{proposition}[Fixed-Effect Estimation Consistency]
\label{PropFixed}

For a fixed network class \(\mathcal F_K\), suppose that the parameter space
\(\Theta_K\) is compact, the population quasi-likelihood criterion is
continuous and uniquely maximized at \(\theta_K^{*}\), and the corresponding
sample quasi-likelihood criterion converges uniformly in probability to its
population counterpart. Then
\[
\widehat{\theta}
\xrightarrow{p}
\theta_K^{*}.
\]
If, in addition, the mapping
\(\theta\mapsto f_{\theta}\) is continuous with respect to the uniform norm
on \(\mathcal X\), then
\[
\sup_{x\in\mathcal X}
\left|
f_{\widehat{\theta}}(x)
-
f_{\theta_K^{*}}(x)
\right|
=
o_p(1).
\]
Thus, for a fixed network class, the fitted neural-network function is
consistent for its population target \(f_{\theta_K^{*}}\).

\end{proposition}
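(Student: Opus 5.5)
This is a standard extremum-estimator (M-estimation) consistency argument, in the style of Newey--McFadden Theorem 2.1, followed by a continuity step. Write $Q_N(\theta)$ for the sample quasi-likelihood criterion and $Q(\theta)$ for its population counterpart on $\Theta_K$. We take $\widehat{\theta}$ to maximize $Q_N$ over $\Theta_K$; it suffices that it does so up to an $o_p(1)$ term.

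\emph{Step 1: identifiable separation.} Fix $\delta>0$ and let $B_\delta=\{\theta\in\Theta_K:\|\theta-\theta_K^{*}\|<\delta\}$. The set $\Theta_K\setminus B_\delta$ is closed in the compact set $\Theta_K$, hence compact, and $Q$ is continuous on it. So $Q$ attains its supremum on $\Theta_K\setminus B_\delta$ at some point $\tilde\theta\neq\theta_K^{*}$. Uniqueness of the maximizer then gives
\[
\epsilon_\delta := Q(\theta_K^{*})-\sup_{\theta\in\Theta_K\setminus B_\delta}Q(\theta)>0 .
\]
This is the only place where compactness and continuity are genuinely needed. Without them, uniqueness alone would not give a positive gap, so I regard this as the key step, although it is short.

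\emph{Step 2: from the criterion gap to parameter consistency.} Let $\Delta_N=\sup_{\theta\in\Theta_K}|Q_N(\theta)-Q(\theta)|$, which is $o_p(1)$ by assumption. Since $\widehat\theta$ maximizes $Q_N$,
\[
Q(\widehat\theta)\ge Q_N(\widehat\theta)-\Delta_N\ge Q_N(\theta_K^{*})-\Delta_N\ge Q(\theta_K^{*})-2\Delta_N .
\]
Therefore the event $\{\widehat\theta\notin B_\delta\}$ implies $Q(\theta_K^{*})-Q(\widehat\theta)\ge\epsilon_\delta$, which in turn implies $2\Delta_N\ge\epsilon_\delta$. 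This gives
\[
P(\|\widehat\theta-\theta_K^{*}\|\ge\delta)\le P(\Delta_N\ge\epsilon_\delta/2)\to0 .
\]
Since $\delta$ was arbitrary, $\widehat\theta\xrightarrow{p}\theta_K^{*}$. Measurability of $\widehat\theta$ is assumed; otherwise one uses outer probability.

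\emph{Step 3: the uniform function statement.} Let $\Phi:\Theta_K\to C(\mathcal X)$ be $\Phi(\theta)=f_\theta$, with $C(\mathcal X)$ carrying the sup norm. By hypothesis $\Phi$ is continuous at $\theta_K^{*}$. So for any $\varepsilon>0$ there is $\delta>0$ such that $\|\theta-\theta_K^{*}\|<\delta$ implies $\sup_{x\in\mathcal X}|f_\theta(x)-f_{\theta_K^{*}}(x)|<\varepsilon$. Hence
\[
P\Big(\sup_{x\in\mathcal X}|f_{\widehat\theta}(x)-f_{\theta_K^{*}}(x)|\ge\varepsilon\Big)\le P(\|\widehat\theta-\theta_K^{*}\|\ge\delta)\to0
\]
by Step 2, which is the continuous mapping theorem applied to $\Phi$.

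I would add a remark that this continuity hypothesis is automatic for feed-forward networks with continuous activation on compact $\mathcal X$ and compact $\Theta_K$. In that case $(\theta,x)\mapsto f_\theta(x)$ is jointly continuous on the compact set $\Theta_K\times\mathcal X$, hence uniformly continuous, which gives sup-norm continuity in $\theta$.
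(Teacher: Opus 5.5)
Your proof is correct and follows the same route as the paper. The paper simply invokes the argmax consistency theorem (citing van der Vaart and Newey--McFadden) for $\widehat{\theta}\xrightarrow{p}\theta_K^{*}$, which your Steps~1--2 prove directly via the well-separated-maximum argument, and your Step~3 is the same $\varepsilon$--$\delta$ bound $P(\sup_{x}|f_{\widehat\theta}(x)-f_{\theta_K^{*}}(x)|\ge\varepsilon)\le P(\|\widehat\theta-\theta_K^{*}\|\ge\delta)$ that the paper uses for the sup-norm conclusion.
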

\begin{remark}
Proposition~\ref{PropFixed} concerns estimation within a fixed
neural-network class and is distinct from the approximation property in
Theorem~\ref{Thm1}. Theorem~\ref{Thm1} characterizes the ability of a
sufficiently rich neural-network class to approximate \(f_0\), whereas
Proposition~\ref{PropFixed} establishes conditions under which the fitted
neural-network function approaches its population target
\(f_{\theta_K^{*}}\) within a given class. The population target
\(f_{\theta_K^{*}}\) need not coincide with the existence-based
approximating function \(f_{\theta_K}\) in Theorem~\ref{Thm1}; therefore,
control of the discrepancy between \(f_{\theta_K^{*}}\) and \(f_0\) is
imposed separately when establishing predictor consistency. Thus,
approximation error and estimation error are treated as separate components
of the GNMM prediction error in the subsequent analysis.
\end{remark}
\subsection{Consistency of Random Effect Estimation}
\label{consistency of bi}

Although Theorem~\ref{Thm1} establishes the ability of GNMM to approximate
the underlying nonlinear fixed-effect function, accurate prediction for
longitudinal and clustered data also requires proper estimation of the
subject-specific random effects. Since the random effects capture
heterogeneity among subjects and induce within-subject dependence, their
consistent estimation is essential for reliable subject-level inference and
prediction. The following Proposition~\ref{Prop1} establishes the consistency
of the estimated random effects. Let \(\widehat{b}_i\) denote the estimator
of the random effect \(b_i\) obtained from the fitted GNMM, and let
\(\ell_i(b)\) denote the corresponding subject-specific conditional
log-likelihood for subject \(i\). For the random-effect consistency result,
we assume consistency of the estimated fixed-effect function for \(f_0\).

\begin{proposition}[Random Effect Consistency]\label{Prop1}

Assume that the GNMM is correctly specified and that
\[
\sup_{x\in\mathcal X}
\left|
f_{\widehat{\theta}}(x)-f_0(x)
\right|
=
o_p(1).
\]
Assume further that the covariance matrix \(D\) is positive definite and the
covariates \(X_{ij}\) and \(Z_{ij}\) are bounded. Suppose that the normalized
subject-specific conditional log-likelihood converges uniformly in
probability to a continuous limiting objective function that is uniquely
maximized at \(b_i\). If \(n_i\rightarrow\infty\), then
\[
\widehat b_i\xrightarrow{p}b_i.
\]

\end{proposition}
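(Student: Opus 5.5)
The argument is a standard M-estimation consistency argument, of the type used in Proposition~\ref{PropFixed}, applied to the subject-level criterion with \(n_i\to\infty\). The only new feature is that the criterion actually maximized depends on the plug-in estimate \(f_{\widehat{\theta}}\) rather than on \(f_0\).

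\textbf{Setup.} Write the normalized criteria as
\[
M_{n_i}(b;f)=n_i^{-1}\ell_i(b;f),
\]
where \(\ell_i(b;f)\) is the conditional log-likelihood of \((Y_{i1},\ldots,Y_{in_i})\) given \(b\), computed with linear predictor \(f(X_{ij})+Z_{ij}^{T}b\). Then \(\widehat b_i\) maximizes \(M_{n_i}(\cdot\,;f_{\widehat\theta})\), possibly with the penalty \(-\tfrac12 n_i^{-1}b^{T}D^{-1}b\) in the penalized-quasi-likelihood version. That penalty is \(O(n_i^{-1})\) uniformly on compacts because \(D\) is positive definite, so it is negligible. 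By hypothesis, \(M_{n_i}(b;f_0)\to M_i(b)\) uniformly in probability over \(b\) in a compact set \(\mathcal B\). Here \(M_i\) is continuous and uniquely maximized at the true \(b_i\), and we condition on \(b_i\) throughout.

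\textbf{Key step: perturbation bound.} The goal is
\[
\sup_{b\in\mathcal B}\bigl|M_{n_i}(b;f_{\widehat\theta})-M_{n_i}(b;f_0)\bigr|=o_p(1).
\]
Each summand of \(\ell_i\) depends on \(f\) only through \(\eta_{ij}=f(X_{ij})+Z_{ij}^{T}b\). For exponential-family or quasi-likelihood contributions of the form \(Y_{ij}\,\vartheta(\eta_{ij})-\kappa(\vartheta(\eta_{ij}))\), the mean value theorem bounds the difference by
\[
\Bigl(n_i^{-1}\sum_j \bigl(|Y_{ij}|+C\bigr)\Bigr)\, L\,\sup_x\bigl|f_{\widehat\theta}(x)-f_0(x)\bigr|.
\]
The constant \(L\) is a Lipschitz constant for \(\vartheta\) and \(\kappa'\circ\vartheta\) on a compact range of \(\eta\). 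That range is compact because \(X_{ij},Z_{ij}\) are bounded, \(b\in\mathcal B\), and \(f_0\) is bounded on the compact set \(\mathcal X\). Moreover, \(f_{\widehat\theta}\) lies within \(1\) of \(f_0\) with probability tending to one. The average \(n_i^{-1}\sum_j|Y_{ij}|\) is \(O_p(1)\) by Markov's inequality, using the finite conditional variance assumed in Section~\ref{sec2} and bounded conditional means. Multiplying by the \(o_p(1)\) sup-norm error gives the claim. Combined with the hypothesis, this yields
\[
\sup_{b\in\mathcal B}\bigl|M_{n_i}(b;f_{\widehat\theta})-M_i(b)\bigr|=o_p(1).
\]

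\textbf{Argmax conclusion.} Fix \(\epsilon>0\) and set
\[
\delta=M_i(b_i)-\sup_{\|b-b_i\|\ge\epsilon,\,b\in\mathcal B}M_i(b).
\]
This \(\delta\) is positive by continuity, uniqueness of the maximizer and compactness. On the event that the uniform deviation is below \(\delta/2\), any maximizer satisfies \(\|\widehat b_i-b_i\|<\epsilon\). That event has probability tending to one, so \(\widehat b_i\xrightarrow{p}b_i\).

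\textbf{Main obstacle.} The hardest part is the restriction to a compact \(\mathcal B\), since the proposition does not say that \(\widehat b_i\) lies in one. I would first handle this by concavity. For canonical links, \(\ell_i(\cdot;f)\) is concave in \(b\), as is the penalized version. A concave criterion converging pointwise to a limit with a unique maximizer has its maximizer eventually inside any ball around \(b_i\), by the standard convexity lemma of Hjort--Pollard type. Failing concavity, I would add the mild assumption that \(\widehat b_i\) lies in a compact set with probability tending to one.

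A second, smaller issue is that the Lipschitz bound requires the \(Y_{ij}\) to have uniformly bounded conditional second moments. I would state this explicitly as part of the regularity conditions.
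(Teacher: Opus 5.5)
Your proposal is correct and follows the same route as the paper's proof: show that replacing $f_0$ by $f_{\widehat\theta}$ changes the normalized log-likelihood by $o_p(1)$ uniformly over $b\in\mathcal B$, combine this with the assumed uniform convergence to the limiting objective, and conclude with the argmax theorem. Your mean-value bound with $n_i^{-1}\sum_j(|Y_{ij}|+C)=O_p(1)$ supplies the justification the paper compresses into the remark that the log-likelihood is continuous in the linear predictor, and your handling of the penalty term and of the compactness of $\mathcal B$ makes explicit what the paper leaves implicit.
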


\begin{remark}
Proposition~\ref{Prop1} shows that the estimated random effects converge to
their true values as the amount of information within each subject increases.
The result therefore applies to an increasing within-subject asymptotic
regime in which \(n_i\rightarrow\infty\); it does not claim consistent
recovery of an individual random effect when the number of repeated
observations per subject remains bounded. In simulation studies, this
property can be examined by comparing the estimated and true random effects
across different cluster sizes. In practical applications, it provides
theoretical support for subject-specific prediction when sufficient
within-subject information is available.
\end{remark}
\subsection{Consistency of the GNMM Predictor}
\label{Consistency Predictor}

Theorem~\ref{Thm1} establishes the approximation capability of the
neural-network class, while Proposition~\ref{PropFixed} establishes
consistency of the fitted neural-network function for its population target.
Together with control of the discrepancy between the population target and
\(f_0\), these results can be combined with the random-effect consistency
result in Proposition~\ref{Prop1} to study consistency of the fitted GNMM
predictor. The following Theorem~\ref{Thm2} shows that the estimated
predictor converges to the true predictor and explicitly identifies the
sources of prediction error.

\begin{theorem}[Predictor Consistency]\label{Thm2}

Suppose that the conditions of Proposition~\ref{PropFixed} hold,
\[
\widehat b_i
\xrightarrow{p}
b_i,
\]
as established in Proposition~\ref{Prop1}, and
\[
\sup_{i,j}\|Z_{ij}\|
\leq
C_Z<\infty.
\]
Suppose further that the population target \(f_{\theta_K^{*}}\) satisfies
\[
\sup_{x\in\mathcal X}
\left|
f_{\theta_K^{*}}(x)-f_0(x)
\right|
\longrightarrow 0.
\]
Let \(\eta_{ij}^{0}\) and \(\widehat{\eta}_{ij}\) denote the true and fitted
linear predictors defined in \eqref{Tpr} and \eqref{GPr}, respectively.
Then
\[
\begin{aligned}
\widehat{\eta}_{ij}-\eta_{ij}^{0}
&=
\underbrace{
f_{\widehat{\theta}}(X_{ij})
-
f_{\theta_K^{*}}(X_{ij})
}_{\text{neural-network estimation error}}
\\
&\quad+
\underbrace{
f_{\theta_K^{*}}(X_{ij})
-
f_0(X_{ij})
}_{\text{model-class approximation error}}
\\
&\quad+
\underbrace{
Z_{ij}^{T}(\widehat b_i-b_i)
}_{\text{random-effect estimation error}}.
\end{aligned}
\]

Consequently,
\begin{align}\label{CP}
\widehat{\eta}_{ij}
-
\eta_{ij}^{0}
=
o_p(1).
\end{align}

Equivalently,
\[
\widehat{\eta}_{ij}
\xrightarrow{p}
\eta_{ij}^{0}.
\]

\end{theorem}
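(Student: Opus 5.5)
The proof has two steps. First we verify the displayed decomposition as an algebraic identity. Then we show that each of the three terms is $o_p(1)$ and conclude with the triangle inequality and Slutsky-type arguments.

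For the identity, subtract \eqref{Tpr} from \eqref{GPr}. This gives
\[
\widehat{\eta}_{ij}-\eta_{ij}^{0}
=
\bigl\{f_{\widehat{\theta}}(X_{ij})-f_0(X_{ij})\bigr\}
+
Z_{ij}^{T}(\widehat b_i-b_i).
\]
Adding and subtracting $f_{\theta_K^{*}}(X_{ij})$ inside the first brace produces the three stated components exactly. No probabilistic argument is needed here.

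Next we bound the three terms in turn. For the neural-network estimation error, Proposition~\ref{PropFixed} applies because its conditions are assumed, including continuity of $\theta\mapsto f_\theta$ in the uniform norm. Since $X_{ij}\in\mathcal X$,
\[
\bigl|f_{\widehat{\theta}}(X_{ij})-f_{\theta_K^{*}}(X_{ij})\bigr|
\le
\sup_{x\in\mathcal X}\bigl|f_{\widehat{\theta}}(x)-f_{\theta_K^{*}}(x)\bigr|
=o_p(1).
\]
For the random-effect estimation error, Cauchy--Schwarz and the bound on $Z_{ij}$ give
\[
|Z_{ij}^{T}(\widehat b_i-b_i)|\le C_Z\|\widehat b_i-b_i\|.
\]
This is $o_p(1)$ because $\widehat b_i\xrightarrow{p}b_i$ and $C_Z$ is a fixed constant. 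For the approximation error, the term is deterministic given $X_{ij}$ and is bounded by $\sup_x|f_{\theta_K^{*}}(x)-f_0(x)|$, which tends to $0$ by assumption. A deterministic sequence tending to zero is trivially $o_p(1)$. Summing the three bounds and using the fact that a finite sum of $o_p(1)$ terms is $o_p(1)$ yields \eqref{CP}, which is equivalent to $\widehat\eta_{ij}\xrightarrow{p}\eta_{ij}^0$.

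The main obstacle is interpretive rather than technical: the asymptotic regime has to be made coherent. Proposition~\ref{PropFixed} is stated for a fixed class $\mathcal F_K$, while the approximation term vanishes only as $K$ grows. We would make this precise by letting $K=K_N$ grow with the sample size (with $m,n_i\to\infty$ jointly) and reading the hypotheses as holding along this sequence. In particular, the uniform estimation error $\sup_x|f_{\widehat\theta}-f_{\theta_{K_N}^{*}}|$ should still be $o_p(1)$ along $K_N$. We would state this explicitly as the interpretation of ``the conditions of Proposition~\ref{PropFixed} hold.'' With that convention, every limit above is taken along the same index and the argument goes through without further difficulty. Since $(i,j)$ is fixed, no uniformity over subjects is required.
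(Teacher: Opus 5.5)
Your proof is correct and follows the paper's argument: subtract the predictors, apply the triangle inequality, bound the fixed-effect part by a sup-norm over $\mathcal X$, and control the random-effect part by Cauchy--Schwarz with $C_Z$. The one difference is in the fixed-effect part. You keep the three-term split and bound the estimation and approximation terms separately from the stated hypotheses. The paper's proof instead merges them and simply asserts $\sup_{x\in\mathcal X}|f_{\widehat{\theta}}(x)-f_0(x)|=o_p(1)$ by appeal to universal approximation. Your version therefore supplies the justification for that step. Your remark about letting $K=K_N$ grow, with the conclusion of Proposition~\ref{PropFixed} assumed along that sequence, also makes explicit an asymptotic regime the paper leaves implicit.
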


\begin{remark}
Theorem~\ref{Thm2} establishes predictor consistency while explicitly
separating the three sources of error in the fitted GNMM predictor:
neural-network estimation error, model-class approximation error, and
random-effect estimation error. Thus, predictor consistency follows when
the fitted neural-network function approaches its population target
\(f_{\theta_K^{*}}\), the discrepancy between \(f_{\theta_K^{*}}\) and
\(f_0\) tends to zero, and the estimated subject-specific random effects
approach their true values. The population target
\(f_{\theta_K^{*}}\) need not coincide with the existence-based
approximating function \(f_{\theta_K}\) in Theorem~\ref{Thm1}.
This decomposition therefore separates estimation within the fitted
network class from the discrepancy between the population target and the
true fixed-effect function. It also provides theoretical support for
studying these error components separately in the simulation study.
\end{remark}
\subsection{Consistency of the Conditional Mean Function}
\label{Consistency mean}

While Theorem~\ref{Thm2} establishes consistency of the linear predictor,
practical prediction is typically based on the conditional mean obtained
through the inverse link function. Therefore, it is important to verify that
consistency of the predictor also translates into consistency of the
estimated conditional mean. Define the estimated conditional mean by
\[
\widehat{\mu}_{ij}
=
g^{-1}(\widehat{\eta}_{ij}),
\]
and the true conditional mean by
\[
\mu_{ij}^{0}
=
g^{-1}(\eta_{ij}^{0}),
\]
where \(g^{-1}(\cdot)\) denotes the inverse link function. The following
Corollary~\ref{Cor1} establishes this result.

\begin{corollary}\label{Cor1}

Suppose the conditions of Theorem~\ref{Thm2} hold and the inverse link
function \(g^{-1}\) is continuous at \(\eta_{ij}^{0}\). Then
\[
\widehat{\mu}_{ij}
-
\mu_{ij}^{0}
=
o_p(1),
\]
or equivalently,
\[
\widehat{\mu}_{ij}
\xrightarrow{p}
\mu_{ij}^{0}.
\]

\end{corollary}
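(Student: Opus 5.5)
The plan is to obtain Corollary~\ref{Cor1} from Theorem~\ref{Thm2} by a continuous mapping argument. Theorem~\ref{Thm2} gives \(\widehat{\eta}_{ij}-\eta_{ij}^{0}=o_p(1)\). Since \(\widehat{\mu}_{ij}=g^{-1}(\widehat{\eta}_{ij})\) and \(\mu_{ij}^{0}=g^{-1}(\eta_{ij}^{0})\), it is enough to show that continuity of \(g^{-1}\) at \(\eta_{ij}^{0}\) turns convergence in probability of the linear predictor into convergence in probability of the mean.

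The argument works directly with the \(\varepsilon\)--\(\delta\) definition. Fix \(\varepsilon>0\) and \(\gamma>0\). The point \(\eta_{ij}^{0}=f_0(X_{ij})+Z_{ij}^{T}b_i\) may itself be random because \(b_i\) is random, so \(\delta\) may depend on \(\eta_{ij}^{0}\). I therefore treat two cases.

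\textbf{Case 1: \(\eta_{ij}^{0}\) fixed.} This covers arguing conditionally on \(b_i\) and the covariates. Continuity gives a \(\delta>0\) with \(|\eta-\eta_{ij}^{0}|<\delta\Rightarrow|g^{-1}(\eta)-g^{-1}(\eta_{ij}^{0})|<\varepsilon\). Hence
\[
P\bigl(|\widehat{\mu}_{ij}-\mu_{ij}^{0}|\ge\varepsilon\bigr)\le P\bigl(|\widehat{\eta}_{ij}-\eta_{ij}^{0}|\ge\delta\bigr)\to0 .
\]

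\textbf{Case 2: \(\eta_{ij}^{0}\) random.} Let \(\delta(\eta_{ij}^{0},\varepsilon)\) be the largest admissible radius, truncated at \(1\). Continuity at \(\eta_{ij}^{0}\) almost surely makes this a positive random variable. Choose a deterministic \(\delta_0>0\) with \(P(\delta(\eta_{ij}^{0},\varepsilon)<\delta_0)<\gamma/2\). Then
\[
P\bigl(|\widehat{\mu}_{ij}-\mu_{ij}^{0}|\ge\varepsilon\bigr)\le P\bigl(\delta(\eta_{ij}^{0},\varepsilon)<\delta_0\bigr)+P\bigl(|\widehat{\eta}_{ij}-\eta_{ij}^{0}|\ge\delta_0\bigr),
\]
and the right-hand side is eventually below \(\gamma\). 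Equivalently, one could cite the standard continuous mapping theorem applied to the pair \((\widehat{\eta}_{ij},\eta_{ij}^{0})\) and the map \((u,v)\mapsto g^{-1}(u)-g^{-1}(v)\). This map is continuous at every point \((v,v)\) with \(v\) a continuity point of \(g^{-1}\), and it vanishes there.

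\textbf{Remarks.} The main obstacle is the measurability and positivity of the random radius \(\delta(\eta_{ij}^{0},\varepsilon)\). In the common GLM cases (identity, logit, log links) \(g^{-1}\) is continuous, indeed locally Lipschitz, everywhere. In those cases \(\delta\) can be taken uniform on compact sets, and tightness of \(\eta_{ij}^{0}\) (bounded \(X_{ij}\), \(Z_{ij}\), continuous \(f_0\) on compact \(\mathcal X\), Gaussian \(b_i\)) gives the bound directly. I would also note that the limiting sequence here is \(n_i\to\infty\) together with the fitting asymptotics of Proposition~\ref{PropFixed}, the same regime in which Theorem~\ref{Thm2} holds, so no new asymptotic conditions are needed.
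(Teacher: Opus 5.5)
Your proposal is correct and follows the paper's route: the paper's proof applies the continuous mapping theorem directly to $\widehat{\eta}_{ij}\xrightarrow{p}\eta_{ij}^{0}$ from Theorem~\ref{Thm2}, and your Case~2 bound is the standard proof of that theorem when the limit is random but almost surely a continuity point of $g^{-1}$. Your extra care about the random radius and its measurability, and the separate treatment of a fixed $\eta_{ij}^{0}$, are harmless but add nothing beyond what the cited theorem already covers.
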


Hence, the estimated conditional mean produced by GNMM is a consistent
estimator of the true conditional mean.

\begin{remark}
Corollary~\ref{Cor1} shows that consistency of the linear predictor carries
over to the conditional mean function through the inverse link. Since
prediction in generalized mixed models is commonly based on the conditional
mean, this result provides theoretical justification for the use of GNMM
predictions. The continuity condition is sufficient for convergence in
probability; stronger moment conditions are introduced separately when
studying mean squared prediction error in the next theorem.
\end{remark}
\subsection{Prediction Error Consistency}\label{Error consistency}

Corollary~\ref{Cor1} establishes consistency of the estimated conditional
mean. A natural next question is whether this consistency translates into
asymptotically optimal predictive performance under squared-error loss. To
answer this question, we compare the prediction error of GNMM with the oracle
prediction error obtained when the true conditional mean is known. The
following Theorem~\ref{Thm3} shows that the excess prediction error converges
to zero under mean-square control of the neural-network estimation error,
model-class approximation error, and random-effect estimation error.

Let
\begin{align}
MSPE_{\mathrm{GNMM}}
&=
E\left(
Y_{ij}
-
\widehat{\mu}_{ij}
\right)^2,
\label{eq:MSPE_GNMM}
\end{align}
denote the mean squared prediction error (MSPE) of GNMM, where \(Y_{ij}\)
denotes a new response from the same data-generating mechanism and
\(\widehat{\mu}_{ij}\) denotes the fitted conditional mean obtained from the
GNMM using the training data. We assume that
\[
E(Y_{ij}\mid X_{ij},Z_{ij},b_i)
=
\mu_{ij}^{0},
\]
and that, conditional on \(X_{ij}\), \(Z_{ij}\), and \(b_i\), the new
response \(Y_{ij}\) is independent of the training data used to construct
\(\widehat{\mu}_{ij}\). Similarly, let
\begin{align}
MSPE_{\mathrm{Oracle}}
&=
E\left(
Y_{ij}
-
\mu_{ij}^{0}
\right)^2,
\label{eq:MSPE_Oracle}
\end{align}
denote the oracle mean squared prediction error, where \(\mu_{ij}^{0}\)
denotes the true conditional mean. The expectation is taken with respect to
the joint distribution of the new response, covariates, random effects, and
the training data through \(\widehat{\mu}_{ij}\). The quantity
\(MSPE_{\mathrm{Oracle}}\) represents the minimum achievable mean squared
prediction error under squared-error loss when the true conditional mean is
known. The difference between \eqref{eq:MSPE_GNMM} and
\eqref{eq:MSPE_Oracle} measures the excess prediction error incurred by
estimating the conditional mean from the data.

\begin{theorem}[Prediction Error Consistency]\label{Thm3}

Suppose the conditions of Theorem~\ref{Thm2} and
Corollary~\ref{Cor1} hold, and assume that the inverse link function
\(g^{-1}\) is Lipschitz continuous with constant \(L_g<\infty\). If
\[
E\left[
\sup_{x\in\mathcal X}
\left|
f_{\widehat{\theta}}(x)
-
f_{\theta_K^{*}}(x)
\right|^2
\right]
\longrightarrow 0,
\]
\[
\sup_{x\in\mathcal X}
\left|
f_{\theta_K^{*}}(x)
-
f_0(x)
\right|^2
\longrightarrow 0,
\]
and
\[
E\left[
\|\widehat b_i-b_i\|^2
\right]
\longrightarrow 0,
\]
then
\[
MSPE_{\mathrm{GNMM}}
-
MSPE_{\mathrm{Oracle}}
\longrightarrow 0.
\]

\end{theorem}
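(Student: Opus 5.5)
The argument rests on the standard bias--variance identity for squared-error loss. Write
\[
Y_{ij}-\widehat{\mu}_{ij}=(Y_{ij}-\mu_{ij}^{0})+(\mu_{ij}^{0}-\widehat{\mu}_{ij}).
\]
Expanding the square gives
\[
MSPE_{\mathrm{GNMM}}=MSPE_{\mathrm{Oracle}}+E\big(\widehat{\mu}_{ij}-\mu_{ij}^{0}\big)^2+2E\big[(Y_{ij}-\mu_{ij}^{0})(\mu_{ij}^{0}-\widehat{\mu}_{ij})\big].
\]
For the cross term, condition on $(X_{ij},Z_{ij},b_i)$ together with the training data. Then $\mu_{ij}^{0}-\widehat{\mu}_{ij}$ is fixed. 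By the assumed conditional independence of the new response from the training data, and by $E(Y_{ij}\mid X_{ij},Z_{ij},b_i)=\mu_{ij}^{0}$, the conditional mean of $Y_{ij}-\mu_{ij}^{0}$ is zero. The cross term therefore vanishes. Integrability is guaranteed by the finite conditional variance assumed in Section~\ref{sec2} and by the finiteness of $E(\widehat{\mu}_{ij}-\mu_{ij}^{0})^2$ established below, via Cauchy--Schwarz. Hence the excess error equals exactly $E(\widehat{\mu}_{ij}-\mu_{ij}^{0})^2$, and it remains to show this tends to zero.

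Next I use the Lipschitz property of $g^{-1}$ to get
\[
|\widehat{\mu}_{ij}-\mu_{ij}^{0}|\le L_g|\widehat{\eta}_{ij}-\eta_{ij}^{0}|.
\]
I then insert the three-term decomposition of Theorem~\ref{Thm2}. Bounding each term uniformly, with $\|Z_{ij}\|\le C_Z$ and Cauchy--Schwarz for the random-effect term, gives
\[
|\widehat{\eta}_{ij}-\eta_{ij}^{0}|\le \sup_{x\in\mathcal X}|f_{\widehat\theta}(x)-f_{\theta_K^*}(x)|+\sup_{x\in\mathcal X}|f_{\theta_K^*}(x)-f_0(x)|+C_Z\|\widehat b_i-b_i\|.
\]
The elementary inequality $(a+b+c)^2\le 3(a^2+b^2+c^2)$ then yields
\[
E(\widehat{\mu}_{ij}-\mu_{ij}^{0})^2\le 3L_g^2\Big\{E\sup_{x}|f_{\widehat\theta}-f_{\theta_K^*}|^2+\sup_{x}|f_{\theta_K^*}-f_0|^2+C_Z^2E\|\widehat b_i-b_i\|^2\Big\}.
\]
Each of the three terms tends to zero by the three displayed hypotheses of the theorem. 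The middle term is deterministic, so no expectation is needed there. The same bound shows the excess is finite, which justifies the manipulation in the first step.

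The main obstacle is the rigorous handling of the cross term rather than the bound itself. The issue is that $\widehat{\mu}_{ij}$ may depend on $b_i$ through $\widehat b_i$, since the new observation comes from a subject seen in training. The conditioning must therefore be on the $\sigma$-field generated by $(X_{ij},Z_{ij},b_i)$ and the training sample jointly, and it is precisely the stated conditional independence assumption that makes $E[Y_{ij}-\mu_{ij}^{0}\mid \text{training},X_{ij},Z_{ij},b_i]=0$. I will state this conditioning step explicitly and use the tower property. Once it is done, the remainder is the routine Lipschitz and triangle-inequality bound above, which only uses uniform (in $x$) control. That is why the hypotheses are stated with suprema over $\mathcal X$: they make the bound valid regardless of the distribution of $X_{ij}$.
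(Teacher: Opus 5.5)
Your proposal is correct and follows the paper's proof step for step: the cross term is shown to vanish so the excess equals $E(\widehat{\mu}_{ij}-\mu_{ij}^{0})^2$, followed by the Lipschitz bound, the three-term decomposition with $\|Z_{ij}\|\le C_Z$, and $(a+b+c)^2\le 3(a^2+b^2+c^2)$. Your explicit conditioning on the training sample jointly with $(X_{ij},Z_{ij},b_i)$, using the stated conditional-independence assumption, is a more careful version of the paper's cross-term step, which cites only $E[Y_{ij}-\mu_{ij}^{0}\mid X_{ij},Z_{ij},b_i]=0$. One caveat: integrability of the cross term actually needs $MSPE_{\mathrm{Oracle}}<\infty$, which pointwise finiteness of the conditional variance does not give by itself, although the statement implicitly presumes it.
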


\begin{remark}
Theorem~\ref{Thm3} shows that the prediction performance of GNMM approaches
that of the oracle predictor when the neural-network estimation error,
model-class approximation error, and random-effect estimation error vanish
in the required mean-square sense. Specifically, the fitted neural-network
function must approach its population target \(f_{\theta_K^{*}}\), the
discrepancy between \(f_{\theta_K^{*}}\) and \(f_0\) must tend to zero, and
the estimated subject-specific random effects must approach their true values
in mean square. These mean-square conditions are stronger than the
convergence-in-probability conditions used in the preceding consistency
results and are required to control mean squared prediction error. This
result provides a direct connection between the approximation, estimation,
and prediction properties developed in this section and gives theoretical
justification for evaluating GNMM using MSPE in the simulation study.
\end{remark}

\section{Simulation Study}\label{sec:simulation}

We conducted simulation studies to investigate the finite-sample behavior of
the proposed generalized neural network mixed model (GNMM). The simulations
were designed to provide finite-sample evidence corresponding to the
approximation and fixed-effect estimation properties, random-effect
consistency, predictor consistency, conditional mean consistency, and
prediction-error consistency results established in Section~\ref{sec3}.
The simulations are intended to examine whether the finite-sample behavior
of GNMM is consistent with the theoretical results rather than to establish
the asymptotic results themselves.

\subsection{Simulation Design}

For subject $i=1,\ldots,m$ and observation $j=1,\ldots,n_i$, the covariate
vector
$X_{ij}=(X_{ij1},X_{ij2},X_{ij3},X_{ij4})^{T}$ was generated independently
from a Uniform$(0,1)$ distribution. Subject-specific random effects were
generated from $b_i\sim N(0,\tau^2)$ with $\tau^2=0.25$. We considered the
following combinations of the number of subjects and repeated observations
per subject,
\[
(m,n_i)\in
\{(50,5),(50,10),(50,20),(100,10),(200,10),(200,20)\}.
\]
These settings were chosen to investigate the finite-sample behavior of
GNMM as the number of subjects $m$ and the number of repeated observations
$n_i$ increase. In particular, the settings with fixed $m=50$ and increasing
$n_i$ were included to examine the effect of increasing within-subject
information, corresponding to the asymptotic regime $n_i\rightarrow\infty$
underlying the random-effect consistency result in
Proposition~\ref{Prop1}. The settings with fixed $n_i=10$ and increasing
$m$ were used to examine the effect of increasing the number of subjects.
Thus, these settings allow the effects of increasing within-subject information and increasing the number of subjects to be examined separately. The data were generated according to the GNMM model
\begin{align}
g(\mu_{ij})
=
f_0(X_{ij})
+
b_i,
\label{eq:sim_model}
\end{align}

where $g(\cdot)$ denotes the appropriate link function and
\[
f_0(X_{ij})
=
\sin(2\pi X_{ij1})
+
X_{ij2}^{2}
+
X_{ij3}X_{ij4}.
\]
This nonlinear fixed-effect function was selected to evaluate the ability
of the neural-network component to recover the underlying nonlinear
fixed-effect structure considered in Theorem~\ref{Thm1} and
Proposition~\ref{PropFixed}. In particular, the neural-network component is
expected to approximate the true nonlinear function $f_0(\cdot)$. Since
$MSE(\widehat f,f_0)$ reflects both neural-network estimation error and
model-class approximation error, the simulation assesses overall
fixed-effect recovery rather than attempting to isolate these two error components separately. We considered two response types. Continuous responses were generated according to
\begin{align}
Y_{ij}
=
f_0(X_{ij})
+
b_i
+
\epsilon_{ij},
\qquad
\epsilon_{ij}\sim N(0,1),
\label{eq:sim_cont}
\end{align}
and binary responses were generated using the logistic model
\begin{align}
Y_{ij}
\sim
\mathrm{Bernoulli}(\mu_{ij}),
\qquad
\mathrm{logit}(\mu_{ij})
=
f_0(X_{ij})
+
b_i.
\label{eq:sim_bin}
\end{align}
We generated $R=500$ independent datasets for each simulation setting. Within each dataset, observations were randomly divided into 70\% training and 30\% testing sets within each subject, so that every subject contributed observations to both sets. The proposed GNMM was fitted to the training data using a feed-forward neural network with subject-specific random intercepts. The subject-specific random effects estimated from the training observations were then used to predict the corresponding testing observations. This within-subject splitting scheme evaluates prediction for additional observations from subjects represented
in the training data and is consistent with the subject-specific prediction setting considered in the theoretical development. Thus, the simulation focuses on prediction for existing subjects with estimated random effects rather than prediction for entirely new subjects. The feed-forward neural network consisted of one hidden layer with
8 hidden units and a ReLU activation function \citep{nair2010rectified}. Model estimation
was performed using BFGS optimization, with a maximum of 300 iterations, a relative convergence tolerance of $10^{-6}$, and an $L_2$ penalty with tuning parameter $\lambda=0.001$ applied to the neural-network parameters. The subject-specific random effects were penalized according to the
Gaussian random-effect distribution $b_i\sim N(0,\tau^2)$ with
$\tau^2=0.25$. The same neural-network architecture and
fitting procedure were used across all simulation settings to provide a
common basis for comparing finite-sample performance. Therefore, the
simulation evaluates finite-sample recovery under a fixed practical network
architecture; it does not directly reproduce the increasing-network-
complexity asymptotic regime used to make the model-class approximation
error vanish in the theoretical development.
\subsection{Evaluation Measures}

We evaluated the fitted GNMM using several performance measures corresponding
to the theoretical results established in Section~\ref{sec3}. Each measure
was first calculated separately for each of the $R=500$ simulation
replications and was then summarized across replications. In addition to the
Monte Carlo mean, the corresponding Monte Carlo standard error (MCSE) was
computed to quantify simulation variability. Specifically, if $Q_r$ denotes
the value of a given performance measure in replication $r$, $r=1,\ldots,R$,
then its Monte Carlo mean and MCSE were calculated as
\[
\overline{Q}
=
\frac{1}{R}\sum_{r=1}^{R}Q_r,
\qquad
MCSE(\overline{Q})
=
\frac{s_Q}{\sqrt{R}},
\]
where $s_Q$ denotes the sample standard deviation of
$Q_1,\ldots,Q_R$ across the simulation replications.

Let $\mathcal D_{\mathrm{test}}$ denote the set of held-out testing
observations and let $N_{\mathrm{test}}=|\mathcal D_{\mathrm{test}}|$ denote
the number of observations in the testing set. We first evaluated recovery
of the nonlinear fixed-effect function using
\begin{align}
MSE(\widehat f,f_0)
=
\frac{1}{N_{\mathrm{test}}}
\sum_{(i,j)\in\mathcal D_{\mathrm{test}}}
\left[
f_{\widehat{\theta}}(X_{ij})
-
f_0(X_{ij})
\right]^2,
\label{eq:mse_f}
\end{align}
where $f_{\widehat{\theta}}(\cdot)$ denotes the neural-network fixed-effect
function estimated from the training data and $f_0(\cdot)$ denotes the true
nonlinear fixed-effect function used to generate the data. This measure
quantifies the overall discrepancy between the fitted neural-network
fixed-effect function and the true nonlinear function $f_0$. Since this
discrepancy reflects both estimation within the fitted network class and the
discrepancy between the population target and $f_0$, it provides a
finite-sample assessment corresponding to the approximation and fixed-effect
estimation results in Theorem~\ref{Thm1} and
Proposition~\ref{PropFixed}. It should therefore be interpreted as a measure
of overall fixed-effect recovery rather than as a direct estimate of the
model-class approximation error alone.

We assessed the random-effect consistency result established in
Section~\ref{consistency of bi} using
$\mathrm{cor}(\widehat b_i,b_i)$ and
\begin{align}
MSE(\widehat b_i,b_i)
=
\frac{1}{m}
\sum_{i=1}^{m}
(\widehat b_i-b_i)^2,
\label{eq:mse_b}
\end{align}
where $b_i$ denotes the true random effect generated for subject $i$ and
$\widehat b_i$ denotes its estimate obtained from the GNMM fitted to the
training observations. These measures assess recovery of the
subject-specific random effects $b_i$. In particular, their behavior as
$n_i$ increases provides a finite-sample assessment of the increasing
within-subject regime considered in Proposition~\ref{Prop1}. Comparisons
with fixed $n_i$ and increasing $m$ are reported separately and are not
interpreted as direct evidence for the $n_i\rightarrow\infty$ condition in
Proposition~\ref{Prop1}. We assessed the predictor consistency result established in
Section~\ref{Consistency Predictor} using
\begin{align}
MSE(\widehat{\eta},\eta^0)
=
\frac{1}{N_{\mathrm{test}}}
\sum_{(i,j)\in\mathcal D_{\mathrm{test}}}
(\widehat{\eta}_{ij}-\eta_{ij}^{0})^2,
\label{eq:mse_eta}
\end{align}
where
\[
\eta_{ij}^{0}
=
f_0(X_{ij})+b_i
\]
denotes the true predictor and
\[
\widehat{\eta}_{ij}
=
f_{\widehat{\theta}}(X_{ij})+\widehat b_i
\]
denotes the fitted GNMM predictor for a testing observation. This measure
evaluates the discrepancy between the fitted GNMM predictor
$\widehat{\eta}_{ij}$ and the true predictor $\eta_{ij}^{0}$ considered in
Theorem~\ref{Thm2}. We assessed the conditional mean consistency result established in
Section~\ref{Consistency mean} using
\begin{align}
MSE(\widehat{\mu},\mu^0)
=
\frac{1}{N_{\mathrm{test}}}
\sum_{(i,j)\in\mathcal D_{\mathrm{test}}}
(\widehat{\mu}_{ij}-\mu_{ij}^{0})^2,
\label{eq:mse_mu}
\end{align}
where
\[
\mu_{ij}^{0}
=
g^{-1}(\eta_{ij}^{0}),
\qquad
\widehat{\mu}_{ij}
=
g^{-1}(\widehat{\eta}_{ij})
\]
denote the true and estimated conditional means, respectively. This measure
evaluates the discrepancy between the estimated conditional mean and the
true conditional mean considered in Corollary~\ref{Cor1}. For the continuous
response with the identity link,
$\widehat{\mu}_{ij}=\widehat{\eta}_{ij}$ and
$\mu_{ij}^{0}=\eta_{ij}^{0}$, so
$MSE(\widehat{\mu},\mu^0)$ and
$MSE(\widehat{\eta},\eta^0)$ coincide. For the binary response with the
logit link, these quantities are distinct. We evaluated out-of-sample predictive performance using
\begin{align}
MSPE_{\mathrm{GNMM}}
&=
\frac{1}{N_{\mathrm{test}}}
\sum_{(i,j)\in\mathcal D_{\mathrm{test}}}
(Y_{ij}-\widehat{\mu}_{ij})^2,
\label{eq:mspe}
\\
MAE
&=
\frac{1}{N_{\mathrm{test}}}
\sum_{(i,j)\in\mathcal D_{\mathrm{test}}}
|Y_{ij}-\widehat{\mu}_{ij}|,
\label{eq:mae}
\end{align}
where $Y_{ij}$ denotes the observed response for testing observation
$(i,j)$ and $\widehat{\mu}_{ij}$ is the corresponding conditional mean
predicted from the fitted GNMM. To directly evaluate the prediction-error consistency result established in
Section~\ref{Error consistency}, we also computed the oracle MSPE,
\begin{align}
MSPE_{\mathrm{Oracle}}
&=
\frac{1}{N_{\mathrm{test}}}
\sum_{(i,j)\in\mathcal D_{\mathrm{test}}}
(Y_{ij}-\mu_{ij}^{0})^2,
\label{eq:oracle_mspe}
\end{align}
and the excess MSPE,
\begin{align}
\Delta MSPE
&=
MSPE_{\mathrm{GNMM}}
-
MSPE_{\mathrm{Oracle}}.
\label{eq:excess_mspe}
\end{align}
Because the true conditional mean $\mu_{ij}^{0}$ is known in the simulation
study, $MSPE_{\mathrm{Oracle}}$ can be calculated directly. The quantity
$\Delta MSPE$ therefore provides a direct finite-sample assessment of the
excess prediction error considered in Theorem~\ref{Thm3}. The excess MSPE
was calculated within each simulation replication as the difference between
the GNMM and oracle MSPEs evaluated on the same test observations and was
then summarized across the $R=500$ replications. According to
Theorem~\ref{Thm3}, this difference is expected to approach zero as the
neural-network estimation error, model-class approximation error, and
random-effect estimation error decrease. Because $\Delta MSPE$ is an
empirical finite-test-sample quantity, small deviations around zero may occur
in individual simulation replications even though the corresponding
population excess mean squared prediction error is nonnegative. For the continuous response model in \eqref{eq:sim_cont}, the oracle
conditional mean is
\[
\mu_{ij}^{0}=f_0(X_{ij})+b_i,
\]
and hence the population oracle MSPE equals
$E(\epsilon_{ij}^{2})=1$. Therefore, the empirical
$MSPE_{\mathrm{Oracle}}$ is expected to be close to 1 and also provides a
useful numerical check on the simulation implementation. For the binary
response, the oracle conditional mean is
\[
\mu_{ij}^{0}
=
\frac{\exp(\eta_{ij}^{0})}
{1+\exp(\eta_{ij}^{0})},
\]
and the oracle MSPE depends on these generated conditional probabilities and
is computed directly from the simulated test observations using
$\mu_{ij}^{0}$. For binary responses, we additionally evaluated the area under the receiver
operating characteristic curve (AUC) \citep{hanley1982auc}, calculated from the observed binary
responses and the fitted conditional probabilities
$\widehat{\mu}_{ij}$. MAE and AUC were included as additional measures of
predictive performance and are not direct measures of the prediction-error
consistency result in Theorem~\ref{Thm3}.

The quantity $MSE(\widehat f,f_0)$ assesses recovery of the nonlinear
fixed-effect function and reflects both estimation and model-class
approximation error. The quantities
$\mathrm{cor}(\widehat b_i,b_i)$ and $MSE(\widehat b_i,b_i)$ assess the
recovery of the subject-specific random effects $b_i$. The quantity
$MSE(\widehat{\eta},\eta^0)$ evaluates the discrepancy between the fitted
predictor $\widehat{\eta}_{ij}$ and the true predictor $\eta_{ij}^{0}$,
while $MSE(\widehat{\mu},\mu^0)$ evaluates the discrepancy between the
estimated conditional mean $\widehat{\mu}_{ij}$ and the true conditional
mean $\mu_{ij}^{0}$. Finally, $\Delta MSPE$ directly assesses the excess
prediction error considered in Theorem~\ref{Thm3}, while MAE and AUC provide
additional measures of predictive performance. Together, these measures
provide finite-sample evidence corresponding to the theoretical properties
developed in Section~\ref{sec3}, without interpreting the finite-sample
simulation results as a proof of the asymptotic results.
\subsection{Simulation Results}

Table~\ref{tab:sim_continuous_recovery} summarizes the recovery and
estimation results for continuous responses. The results indicate that the
performance of GNMM improves as the amount of information available in the
data increases. In particular, the error in estimating the nonlinear
fixed-effect function, measured by $MSE(\widehat f,f_0)$, decreases from
0.372 to 0.077 as the simulation setting changes from
$(m,n_i)=(50,5)$ to $(200,20)$. This reduction indicates increasingly
accurate recovery of the underlying nonlinear fixed-effect function
$f_0(\cdot)$ and provides finite-sample evidence consistent with the
approximation and fixed-effect estimation properties established in
Theorem~\ref{Thm1} and Proposition~\ref{PropFixed}.

The quality of random-effect estimation also improves as the amount of
within-subject information increases. For fixed $m=50$, increasing $n_i$
from 5 to 20 decreases $MSE(\widehat b_i,b_i)$ from 0.161 to 0.064 and
increases $\mathrm{cor}(\widehat b_i,b_i)$ from 0.597 to 0.870. Similarly,
for $(m,n_i)=(200,20)$, the random-effect MSE is 0.060 and the correlation
between the estimated and true random effects reaches 0.873. These findings
show that the estimated random effects $\widehat b_i$ more closely recover
the true random effects $b_i$ as within-subject information increases and
are consistent with the increasing-$n_i$ random-effect consistency result
established in Proposition~\ref{Prop1}.

A similar pattern is observed for the fitted predictor and conditional mean.
The predictor error, measured by $MSE(\widehat{\eta},\eta^0)$, decreases
from 0.521 for $(m,n_i)=(50,5)$ to 0.134 for $(200,20)$. Because the
continuous-response model uses the identity link,
$MSE(\widehat{\mu},\mu^0)$ is identical to
$MSE(\widehat{\eta},\eta^0)$ in this setting. These reductions indicate that
the fitted GNMM predictor $\widehat{\eta}_{ij}$ and conditional mean
$\widehat{\mu}_{ij}$ approach their corresponding true values as the amount
of information increases, providing finite-sample evidence consistent with
Theorem~\ref{Thm2} and Corollary~\ref{Cor1}.
\begin{table}[htbp]
\centering
\scriptsize
\caption{Recovery and estimation results for continuous responses. Entries
are Monte Carlo means with Monte Carlo standard errors (MCSEs) in
parentheses based on $R=500$ replications.}
\label{tab:sim_continuous_recovery}

\begin{tabular}{ccccccc}
\hline
$m$ & $n_i$ &
$MSE(\widehat{f},f_0)$ &
$MSE(\widehat{b},b)$ &
$\mathrm{cor}(\widehat{b},b)$ &
$MSE(\widehat{\eta},\eta^0)$ &
$MSE(\widehat{\mu},\mu^0)$ \\
\hline
50  & 5  & 0.372 (0.006) & 0.161 (0.001) & 0.597 (0.004) &
0.521 (0.007) & 0.521 (0.007) \\
50  & 10 & 0.197 (0.003) & 0.106 (0.001) & 0.770 (0.003) &
0.292 (0.004) & 0.292 (0.004) \\
50  & 20 & 0.124 (0.003) & 0.064 (0.001) & 0.870 (0.002) &
0.178 (0.003) & 0.178 (0.003) \\
100 & 10 & 0.120 (0.003) & 0.098 (0.001) & 0.782 (0.002) &
0.213 (0.003) & 0.213 (0.003) \\
200 & 10 & 0.094 (0.004) & 0.097 (0.001) & 0.786 (0.001) &
0.188 (0.004) & 0.188 (0.004) \\
200 & 20 & 0.077 (0.003) & 0.060 (0.000) & 0.873 (0.001) &
0.134 (0.003) & 0.134 (0.003) \\
\hline
\end{tabular}
\end{table}

Table~\ref{tab:sim_continuous_prediction} presents the corresponding
prediction results for continuous responses. As the amount of information
increases, $MSPE_{\mathrm{GNMM}}$ decreases from 1.525 for
$(m,n_i)=(50,5)$ to 1.137 for $(200,20)$, while MAE decreases from
0.982 to 0.850. In contrast, $MSPE_{\mathrm{Oracle}}$ remains approximately
1 across all simulation settings, as expected under the Gaussian error
distribution with unit variance. Most importantly, the excess prediction
error $\Delta MSPE$ decreases from 0.523 to 0.135. This pattern indicates
that the prediction performance of GNMM moves toward that of the oracle as
the neural-network estimation error, approximation error, and random-effect
estimation error decrease. The results therefore provide direct
finite-sample evidence consistent with the prediction-error consistency
result established in Theorem~\ref{Thm3}.

\begin{table}[!htbp]
\centering
\caption{Prediction results for continuous responses. Entries are Monte
Carlo means with Monte Carlo standard errors (MCSEs) in parentheses based
on $R=500$ replications.}
\label{tab:sim_continuous_prediction}
\begin{tabular}{cccccc}
\hline
$m$ & $n_i$ &
$MSPE_{\mathrm{GNMM}}$ &
$MSPE_{\mathrm{Oracle}}$ &
$\Delta MSPE$ &
MAE \\
\hline
50  & 5  & 1.525 (0.012) & 1.002 (0.006) & 0.523 (0.010) & 0.982 (0.004) \\
50  & 10 & 1.294 (0.007) & 1.003 (0.005) & 0.290 (0.006) & 0.907 (0.003) \\
50  & 20 & 1.178 (0.006) & 0.999 (0.004) & 0.180 (0.004) & 0.866 (0.002) \\
100 & 10 & 1.214 (0.005) & 1.002 (0.003) & 0.212 (0.004) & 0.879 (0.002) \\
200 & 10 & 1.184 (0.005) & 0.996 (0.003) & 0.189 (0.004) & 0.868 (0.002) \\
200 & 20 & 1.137 (0.004) & 1.002 (0.002) & 0.135 (0.003) & 0.850 (0.001) \\
\hline
\end{tabular}
\end{table}

Table~\ref{tab:sim_binary_recovery} presents the recovery and estimation
results for binary responses. Similar overall improvements are observed as
the number of subjects $m$ and repeated observations per subject $n_i$
increase, although estimation is more difficult in the smallest binary
setting. For $(m,n_i)=(50,5)$, $MSE(\widehat f,f_0)$ and
$MSE(\widehat{\eta},\eta^0)$ are relatively large, indicating finite-sample
instability when only a small number of binary observations are available
within each subject. As the amount of information increases,
$MSE(\widehat f,f_0)$ decreases substantially, reaching 0.127 for
$(m,n_i)=(200,20)$. This pattern indicates increasingly accurate recovery
of the nonlinear fixed-effect structure and is consistent with the
approximation and fixed-effect estimation results in Theorem~\ref{Thm1} and
Proposition~\ref{PropFixed}.

The estimated random effects also become more accurate as within-subject
information increases. For fixed $m=50$, increasing $n_i$ from 5 to 20
decreases $MSE(\widehat b_i,b_i)$ from 0.219 to 0.156 and increases
$\mathrm{cor}(\widehat b_i,b_i)$ from 0.327 to 0.623. For
$(m,n_i)=(200,20)$, the correlation reaches 0.632. This pattern provides
finite-sample evidence consistent with Proposition~\ref{Prop1}, while also
illustrating the greater difficulty of recovering subject-specific random
effects from binary responses.

The predictor error also decreases substantially as the amount of
information increases. In particular, $MSE(\widehat{\eta},\eta^0)$
decreases from 6.826 in the smallest setting to 0.276 for
$(m,n_i)=(200,20)$. More importantly for probability prediction,
$MSE(\widehat{\mu},\mu^0)$ decreases from 0.054 to 0.011. Thus, although
the linear-predictor scale is relatively unstable in the smallest binary
setting, the estimated conditional probabilities become increasingly close
to the true probabilities as the sample information increases. These
findings provide finite-sample evidence consistent with the predictor
consistency result in Theorem~\ref{Thm2} and the conditional mean
consistency result in Corollary~\ref{Cor1}.
\begin{table}[htbp]
\centering
\scriptsize
\caption{Recovery and estimation results for binary responses. Entries are
Monte Carlo means with Monte Carlo standard errors (MCSEs) in parentheses
based on $R=500$ replications.}
\label{tab:sim_binary_recovery}

\begin{tabular}{ccccccc}
\hline
$m$ & $n_i$ &
$MSE(\widehat{f},f_0)$ &
$MSE(\widehat{b},b)$ &
$\mathrm{cor}(\widehat{b},b)$ &
$MSE(\widehat{\eta},\eta^0)$ &
$MSE(\widehat{\mu},\mu^0)$ \\
\hline
50  & 5  & 6.624 (0.711) & 0.219 (0.002) & 0.327 (0.005) &
6.826 (0.711) & 0.054 (0.001) \\
50  & 10 & 0.673 (0.041) & 0.192 (0.002) & 0.486 (0.005) &
0.847 (0.040) & 0.027 (0.000) \\
50  & 20 & 0.308 (0.005) & 0.156 (0.001) & 0.623 (0.004) &
0.452 (0.005) & 0.018 (0.000) \\
100 & 10 & 0.303 (0.005) & 0.189 (0.001) & 0.493 (0.003) &
0.489 (0.005) & 0.019 (0.000) \\
200 & 10 & 0.186 (0.003) & 0.189 (0.001) & 0.498 (0.002) &
0.373 (0.003) & 0.015 (0.000) \\
200 & 20 & 0.127 (0.003) & 0.151 (0.001) & 0.632 (0.002) &
0.276 (0.003) & 0.011 (0.000) \\
\hline
\end{tabular}

\end{table}
Table~\ref{tab:sim_binary_prediction} presents the prediction results for
binary responses. The $MSPE_{\mathrm{GNMM}}$ decreases from 0.253 for
$(m,n_i)=(50,5)$ to 0.210 for $(200,20)$, while the AUC increases from
0.615 to 0.697. The oracle MSPE remains stable at approximately 0.198
across the simulation settings. Most importantly, $\Delta MSPE$ decreases
from 0.054 to 0.011 as the amount of information increases. This reduction
shows that the excess prediction error of GNMM relative to the oracle
becomes progressively smaller, providing direct finite-sample evidence
consistent with Theorem~\ref{Thm3}. The MAE also decreases slightly from
0.428 to 0.417, while the increase in AUC indicates improved discrimination
for the binary response.

\begin{table}[!htbp]
\centering
\caption{Prediction results for binary responses. Entries are Monte Carlo
means with Monte Carlo standard errors (MCSEs) in parentheses based on
$R=500$ replications.}
\label{tab:sim_binary_prediction}
\begin{tabular}{ccccccc}
\hline
$m$ & $n_i$ &
$MSPE_{\mathrm{GNMM}}$ &
$MSPE_{\mathrm{Oracle}}$ &
$\Delta MSPE$ &
MAE & AUC \\
\hline
50  & 5  & 0.253 (0.001) & 0.198 (0.001) & 0.054 (0.001) &
0.428 (0.001) & 0.615 (0.003) \\
50  & 10 & 0.226 (0.001) & 0.199 (0.001) & 0.027 (0.001) &
0.424 (0.001) & 0.653 (0.002) \\
50  & 20 & 0.216 (0.001) & 0.198 (0.000) & 0.018 (0.000) &
0.419 (0.001) & 0.678 (0.001) \\
100 & 10 & 0.218 (0.000) & 0.198 (0.000) & 0.019 (0.000) &
0.423 (0.001) & 0.673 (0.002) \\
200 & 10 & 0.214 (0.000) & 0.198 (0.000) & 0.015 (0.000) &
0.421 (0.000) & 0.684 (0.001) \\
200 & 20 & 0.210 (0.000) & 0.198 (0.000) & 0.011 (0.000) &
0.417 (0.000) & 0.697 (0.001) \\
\hline
\end{tabular}
\end{table}

Overall, the simulation study provides finite-sample evidence consistent
with the theoretical results established in Section~\ref{sec3}. Recovery of the nonlinear fixed-effect function improves as the amount of information increases, providing finite-sample evidence for accurate
fixed-effect recovery and results consistent with the approximation and
fixed-effect estimation framework developed in Theorem~\ref{Thm1} and
Proposition~\ref{PropFixed}. The settings with fixed $m=50$ and increasing
$n_i$ show particularly clear improvement in random-effect recovery,
consistent with the increasing within-subject asymptotic regime underlying
Proposition~\ref{Prop1}. Likewise, the reductions in
$MSE(\widehat{\eta},\eta^0)$ and $MSE(\widehat{\mu},\mu^0)$ provide
finite-sample evidence consistent with predictor and conditional mean
consistency. Finally, the systematic reduction in $\Delta MSPE$ for both
continuous and binary responses shows that the prediction error of GNMM
moves toward the oracle prediction error as the available information
increases, consistent with Theorem~\ref{Thm3}. Collectively, these findings
show that the approximation, estimation, random-effect recovery, and
prediction properties developed theoretically are reflected in the
finite-sample behavior of GNMM.

The primary simulation summaries were computed using all replications that
produced finite parameter estimates and evaluation measures. Numerical
convergence was assessed separately using the convergence code returned by
the BFGS optimizer, with convergence code zero indicating that the formal
optimization convergence criterion was satisfied. Thus, a nonzero
convergence code does not necessarily indicate that a replication failed to
produce finite estimates; rather, it indicates that the optimizer did not
satisfy its formal convergence criterion within the specified optimization
settings. The optimization convergence rate ranged from 0.940 to 0.954
across the continuous-response settings and from 0.718 to 0.908 across the
binary-response settings. The lower convergence rates observed for the
smaller binary-response settings are consistent with greater numerical
difficulty in fitting the nonlinear mixed model when within-subject binary
information is limited. We report these convergence rates separately to
provide a transparent assessment of the numerical performance of the
estimation procedure.

\section{Real Data Analysis}\label{sec5}

The simulation study provides empirical support for the theoretical results
established in Section~\ref{sec3} under controlled settings where the true
data-generating mechanism is known. We next investigate the practical
performance of the proposed generalized neural network mixed model (GNMM)
using a real longitudinal dataset consisting of repeated measurements
collected from multiple subjects over time. Unlike the simulation study, the
true fixed-effect function $f_0(\cdot)$, the subject-specific random effects
$b_i$, the true predictor $\eta_{ij}^{0}$, and the true conditional mean
$\mu_{ij}^{0}$ are unknown in real data. Consequently, model performance is
evaluated through out-of-sample prediction rather than direct assessment of
the theoretical properties developed in Section~\ref{sec3}. The dataset was
divided into training and testing sets within subjects to preserve the
longitudinal structure of the data and evaluate prediction performance on
held-out observations from subjects represented in the training data. The
proposed GNMM was fitted using a feed-forward neural network for the
fixed-effect function together with subject-specific random intercepts.

\subsection{Dataset Description}

We analyzed a longitudinal schizophrenia dataset obtained from a mobile
health (mHealth) study of 15 individuals diagnosed with schizophrenia. The
dataset was collected from smartphone sensor and clinical symptom assessments
using the Beiwe digital phenotyping platform over an approximately 90-day
observation period \citep{torous2016} and was previously analyzed by
Mandel et al. \citep{mandel2023neural}. The dataset is available at
\url{https://onlinelibrary.wiley.com/doi/10.1111/biom.13615}. The dataset
contains repeated measurements of depression, anxiety, mobility,
communication, and behavioral characteristics collected from individuals.
The primary outcomes considered in this study were depression and anxiety
symptom severity, together with 15 predictors. The dataset has a longitudinal
structure with repeated observations collected from each subject over time.
The number of observations per individual varies substantially, ranging from
2 to 121 observations, with an average of approximately 53 observations per
subject. Consequently, the data exhibit both within-subject dependence and
subject-specific heterogeneity, making them particularly suitable for
evaluating mixed-effects methodologies. These characteristics also motivate
the use of the proposed GNMM, which combines flexible nonlinear fixed-effect
modeling with subject-specific random effects to accommodate the complex
longitudinal structure of the data.

For the continuous-response analysis, depression scores were analyzed as a
continuous outcome. For the binary-response analysis, depression scores were
converted into a binary outcome using the sample median as the cutoff. The
binary analysis was included to evaluate GNMM under a non-Gaussian response
and to provide a common binary outcome for comparing GNMM, GLMM, and ANN.
Because this dichotomization is used for methodological evaluation rather
than to define a clinically validated diagnostic threshold, the resulting
binary outcome should be interpreted as a data-defined classification
outcome.
For all models, the same predictor variables and training and testing
observations were used to ensure a comparable evaluation. The data were
divided into 70\% training and 30\% testing observations within each subject,
ensuring that all subjects in the testing set were also represented in the
training set. This design evaluates subject-specific prediction for existing
subjects, consistent with the random-effect prediction setting considered in
Section~\ref{sec3}. All predictive measures were calculated using the
held-out testing observations.

\subsection{Evaluation Criteria}

The theoretical results in Section~\ref{sec3} establish desirable properties
of the fitted predictor $\widehat{\eta}_{ij}$, the estimated conditional mean
$\widehat{\mu}_{ij}$, and the resulting prediction error. Since the true
predictor $\eta_{ij}^{0}$, conditional mean $\mu_{ij}^{0}$, and random effects
$b_i$ are unknown in real data, these theoretical quantities cannot be
evaluated directly. Therefore, model performance was assessed using
out-of-sample predictive measures that reflect the practical accuracy of the
fitted models. These real-data measures are used to evaluate practical
predictive performance and should not be interpreted as direct empirical
verification of the consistency results in Section~\ref{sec3}.

For the continuous depression outcome, predictive performance was evaluated
using the mean squared prediction error (MSPE) and mean absolute error (MAE),
defined in equations~\eqref{eq:mspe} and~\eqref{eq:mae}, respectively,
together with the root mean squared error (RMSE). Smaller values of these
measures indicate more accurate prediction of the observed responses. In
particular, MSPE provides the real-data counterpart of the squared-error
prediction criterion studied theoretically in
Subsection~\ref{Error consistency}; however, the oracle MSPE and excess MSPE
cannot be calculated because the true conditional mean is unknown.

For the binary depression outcome, predictive performance was evaluated
using the area under the receiver operating characteristic curve (AUC),
classification accuracy, sensitivity, and specificity. Larger values of
these measures indicate better discrimination or classification performance.
These measures provide practical assessments of predictive performance but
are not direct evaluations of the conditional mean consistency result in
Subsection~\ref{Consistency mean}, because the true conditional response
probabilities are unknown in the real dataset.

We compared the proposed GNMM with a generalized linear mixed model (GLMM)
and a conventional artificial neural network (ANN). The GLMM models
subject-specific heterogeneity through the random effects $b_i$ but assumes a
linear fixed-effect structure. In contrast, ANN models nonlinear fixed
effects but does not explicitly account for within-subject dependence through
subject-specific random effects. GNMM combines both modeling strategies by
using a neural network to represent potentially nonlinear fixed-effect
relationships while accounting for subject-specific heterogeneity through
the random effects $b_i$. All three models were fitted using the same
training observations and predictor variables and were evaluated on the same
testing observations to provide a fair comparison.
\subsection{Continuous Depression Analysis}

Table~\ref{tab:real_continuous} summarizes the prediction performance for
the continuous depression outcome. Among the three competing methods, GLMM
achieved the smallest prediction error, followed by GNMM and ANN. Specifically,
GLMM produced an MSPE of 0.041, RMSE of 0.203, and MAE of 0.133. GNMM
achieved an MSPE of 0.178, RMSE of 0.422, and MAE of 0.269, whereas ANN
resulted in the largest prediction errors across all three measures. The
better performance of GLMM suggests that, for this dataset and outcome, a
substantial portion of the predictive structure may be adequately captured
by the subject-specific random effects together with a linear fixed-effect
component. Nevertheless, GNMM consistently outperformed ANN, indicating that
explicitly accounting for subject-specific heterogeneity and within-subject
dependence can improve prediction relative to a conventional neural network
that does not include random effects.

Although the theoretical results in Section~\ref{sec3} do not imply that
GNMM will outperform every competing method in finite samples, the results
show that GNMM provides useful out-of-sample predictive performance while
simultaneously accommodating nonlinear fixed effects and subject-specific
heterogeneity. The superior performance of GLMM also indicates that the
additional nonlinear flexibility of GNMM does not necessarily improve
prediction when a simpler mixed-effects structure is sufficient for a given
outcome. Since the true predictor $\eta_{ij}^{0}$ and conditional mean
$\mu_{ij}^{0}$ are unknown in the real dataset, these results should be
interpreted as evidence of practical predictive performance rather than
direct verification of the consistency results in
Theorems~\ref{Thm2} and~\ref{Thm3}.

\begin{table}[htbp]
\centering
\caption{Prediction performance for the continuous depression outcome.
MSPE and MAE are defined in Equations~\eqref{eq:mspe} and~\eqref{eq:mae},
respectively, while RMSE denotes the root mean squared error. Smaller values
indicate better out-of-sample predictive performance.}
\label{tab:real_continuous}
\begin{tabular}{lccc}
\hline
Method & MSPE & RMSE & MAE \\
\hline
GLMM & 0.041 & 0.203 & 0.133 \\
ANN  & 0.297 & 0.545 & 0.337 \\
GNMM & 0.178 & 0.422 & 0.269 \\
\hline
\end{tabular}
\end{table}

\subsection{Binary Depression Analysis}

Table~\ref{tab:real_binary} presents the classification results for the
binary depression outcome. ANN achieved the highest overall classification
performance with an AUC of 0.926 and an accuracy of 0.857. GNMM achieved an
AUC of 0.893 and an accuracy of 0.825, whereas GLMM achieved an AUC of 0.841
and an accuracy of 0.833. Compared with GLMM, the higher AUC obtained by
GNMM suggests that allowing a flexible nonlinear fixed-effect component may
improve discrimination for this outcome. At the same time, the inclusion of
the random effects $b_i$ enables GNMM to account explicitly for
subject-specific heterogeneity and within-subject dependence, which are not
represented by the conventional ANN.

Although ANN achieved the highest AUC for this dataset, GNMM remained
competitive while improving upon GLMM in discrimination ability. These
results illustrate the practical trade-off between nonlinear modeling
flexibility and explicit modeling of subject-specific dependence. In
particular, ANN may benefit from greater nonlinear flexibility when
classification is the primary objective, whereas GNMM additionally models
the longitudinal dependence structure through the random effects. Because
the true fixed-effect function $f_0(\cdot)$, true predictor
$\eta_{ij}^{0}$, and true conditional response probabilities
$\mu_{ij}^{0}$ are unknown in the real data, the observed classification
performance does not directly verify the approximation or consistency
results in Section~\ref{sec3}. Instead, it demonstrates the practical
predictive performance of the competing models under the observed
longitudinal structure.

\begin{table}[htbp]
\centering
\caption{Prediction performance for the binary depression outcome. AUC
denotes the area under the receiver operating characteristic curve, while
accuracy, sensitivity, and specificity summarize classification performance
based on the estimated conditional probabilities $\widehat{\mu}_{ij}$.
Larger values indicate better predictive performance.}
\label{tab:real_binary}
\begin{tabular}{lcccc}
\hline
Method & AUC & Accuracy & Sensitivity & Specificity \\
\hline
GLMM & 0.841 & 0.833 & 0.787 & 0.861 \\
ANN  & 0.926 & 0.857 & 0.787 & 0.899 \\
GNMM & 0.893 & 0.825 & 0.745 & 0.873 \\
\hline
\end{tabular}
\end{table}

\section{Discussion and Conclusion}\label{sec6}

In this paper, we establish a theoretical framework for generalized neural
network mixed models (GNMMs) for longitudinal and clustered data. We prove
the approximation property of the neural-network fixed-effect component and
establish fixed-effect estimation consistency within a given neural-network
class, together with random-effect consistency under increasing within-subject
information, predictor consistency, conditional mean consistency, and
prediction-error consistency relative to an oracle predictor. These results
provide a statistical justification for using GNMM to model complex nonlinear
relationships while accounting for subject-specific heterogeneity. The
simulation study provides finite-sample evidence corresponding to these
properties. As the numbers of subjects and repeated observations increase,
the nonlinear fixed-effect function is recovered more accurately, the
estimated random effects and fitted predictor approach their true
counterparts, and the conditional mean estimation error decreases. In
addition, the excess mean squared prediction error relative to the oracle
decreases as the amount of information increases. The real-data analysis
further demonstrates the practical applicability of GNMM using a longitudinal
schizophrenia dataset. GNMM achieved competitive predictive performance for
both continuous and binary depression outcomes while accounting for nonlinear
fixed effects and within-subject heterogeneity.

An important direction for further development is to extend the theoretical
framework of GNMM to more complex deep learning architectures. In particular,
integrating generative adversarial networks (GANs) with mixed-effects
structures may provide a flexible approach for modeling and generating complex
longitudinal and clustered data. Recent theoretical developments in GANs provide motivation for investigating the statistical
properties of such extensions.

\backmatter
\bmhead{Supplementary information}

The supplementary material contains additional simulation results, implementation details, and the R code used for the simulation study and real-data analysis.

\section*{Declarations}

\textbf{Funding}

The authors received no specific funding for this work.

\textbf{Conflict of interest}

The authors declare that they have no competing interests.

\textbf{Ethics approval and consent to participate}

Not applicable.

\textbf{Consent for publication}

Not applicable.

\textbf{Data availability}
 The processed data and simulation results are available from the corresponding author upon reasonable request.

\textbf{Materials availability}

Not applicable.

\textbf{Code availability}

The R code used to  conduct the simulation study, and perform the real-data analysis is available from the corresponding author upon reasonable request.
\newpage
\textbf{Author Contributions.} 
Mahmud Hasan conceived the study, developed the framework, designed the simulation study, conducted the real-data analysis, and drafted the manuscript. Md. Moinul Ahsan contributed to the R code and the generation of the simulation and real-data results using cluster computing. Mathias Nthiani Muia contributed to the theoretical development and the mathematical proofs. Mous-Abou Hamadou contributed to the interpretation of the results and refinement of the theoretical proofs. Chathurika S. Abeykoon critically reviewed the manuscript and provided valuable suggestions that improved its quality. All authors reviewed, revised, and approved the final version of the manuscript.

\begin{appendices}

\begin{proof}[Proof of Theorem~\ref{Thm1}]
Since \(f_0\) is continuous on the compact set \(\mathcal X\), the universal approximation property implies that for every \(\varepsilon>0\), there exists \(f_{\theta_K}\in\mathcal F_K\) such that
\[
\sup_{x\in\mathcal X}
\left|
f_{\theta_K}(x)-f_0(x)
\right|
<
\varepsilon.
\]
From \eqref{Tpr} and \eqref{GPr},
\[
\eta_{ij}^{K}
=
f_{\theta_K}(X_{ij})
+
Z_{ij}^{T}b_i
\]
and
\[
\eta_{ij}^{0}
=
f_0(X_{ij})
+
Z_{ij}^{T}b_i.
\]

Then,
\[
\begin{aligned}
\left|
\eta_{ij}^{K}
-
\eta_{ij}^{0}
\right|
&=
\left|
\left\{
f_{\theta_K}(X_{ij})
+
Z_{ij}^{T}b_i
\right\}
-
\left\{
f_0(X_{ij})
+
Z_{ij}^{T}b_i
\right\}
\right| \\
&=
\left|
f_{\theta_K}(X_{ij})
-
f_0(X_{ij})
\right|.
\end{aligned}
\]

Since \(X_{ij}\in\mathcal X\), it follows that
\[
\left|
f_{\theta_K}(X_{ij})
-
f_0(X_{ij})
\right|
\le
\sup_{x\in\mathcal X}
\left|
f_{\theta_K}(x)-f_0(x)
\right|.
\]

Therefore,
\[
\left|
\eta_{ij}^{K}
-
\eta_{ij}^{0}
\right|
<
\varepsilon.
\]

Taking the supremum over all \(i\) and \(j\), we obtain
\[
\sup_{i,j}
\left|
\eta_{ij}^{K}
-
\eta_{ij}^{0}
\right|
<
\varepsilon.
\]
\end{proof}

\begin{proof}[Proof of Proposition~\ref{PropFixed}]

Let \(Q_{N,K}(\theta)\) denote the sample quasi-likelihood criterion for the
fixed network class \(\mathcal F_K\), and let \(Q_K(\theta)\) denote its
population counterpart. By definition,

\[
\widehat{\theta}
=
\arg\max_{\theta\in\Theta_K}
Q_{N,K}(\theta),
\]

while, by the identification assumption,

\[
\theta_K^{*}
=
\arg\max_{\theta\in\Theta_K}
Q_K(\theta).
\]

By assumption,

\[
\sup_{\theta\in\Theta_K}
\left|
Q_{N,K}(\theta)
-
Q_K(\theta)
\right|
\xrightarrow{p}
0.
\]

Since \(\Theta_K\) is compact, \(Q_K(\theta)\) is continuous on
\(\Theta_K\), and \(Q_K(\theta)\) has the unique maximizer
\(\theta_K^{*}\), the argmax consistency theorem
\citep{vandervaart1998,newey1994} implies that

\[
\widehat{\theta}
\xrightarrow{p}
\theta_K^{*}.
\]

It remains to establish convergence of the corresponding neural-network
functions. By assumption, the mapping

\[
\theta
\longmapsto
f_{\theta}
\]

is continuous with respect to the uniform norm on \(\mathcal X\). Therefore,
for every \(\varepsilon>0\), there exists \(\delta>0\) such that

\[
\|\theta-\theta_K^{*}\|<\delta
\]

implies

\[
\sup_{x\in\mathcal X}
\left|
f_{\theta}(x)
-
f_{\theta_K^{*}}(x)
\right|
<
\varepsilon.
\]

Hence,

\[
\begin{aligned}
P\left(
\sup_{x\in\mathcal X}
\left|
f_{\widehat{\theta}}(x)
-
f_{\theta_K^{*}}(x)
\right|
>
\varepsilon
\right)
&\leq
P\left(
\|\widehat{\theta}-\theta_K^{*}\|
\geq
\delta
\right).
\end{aligned}
\]

Since

\[
\widehat{\theta}
\xrightarrow{p}
\theta_K^{*},
\]

the probability on the right-hand side converges to zero. Therefore,

\[
\sup_{x\in\mathcal X}
\left|
f_{\widehat{\theta}}(x)
-
f_{\theta_K^{*}}(x)
\right|
\xrightarrow{p}
0,
\]

or equivalently,

\[
\sup_{x\in\mathcal X}
\left|
f_{\widehat{\theta}}(x)
-
f_{\theta_K^{*}}(x)
\right|
=
o_p(1).
\]

Thus, within the fixed neural-network class \(\mathcal F_K\), the fitted
neural-network function consistently estimates its population target
\(f_{\theta_K^{*}}\).
\end{proof}

\begin{proof}[Proof of Proposition \ref{Prop1}]

Let \(\ell_i(b)\) denote the subject-specific conditional log-likelihood for subject \(i\), written as a function of the random effect \(b\). Let \(\widehat{\ell}_i(b)\) denote the corresponding fitted log-likelihood obtained after replacing \(f_0\) by \(f_{\widehat{\theta}}\). The estimated random effect is defined by

\[
\widehat b_i
=
\arg\max_{b\in\mathcal B}
\widehat{\ell}_i(b),
\]

where \(\mathcal B\) denotes the parameter space for \(b_i\).

Define

\[
\eta_{ij}^{0}(b)
=
f_0(X_{ij})
+
Z_{ij}^{T}b
\]

and

\[
\widehat{\eta}_{ij}(b)
=
f_{\widehat{\theta}}(X_{ij})
+
Z_{ij}^{T}b.
\]

By the assumed consistency of \(f_{\widehat{\theta}}\),

\[
\sup_{x\in\mathcal X}
\left|
f_{\widehat{\theta}}(x)-f_0(x)
\right|
=
o_p(1).
\]

For any \(b\in\mathcal B\), we have

\[
\begin{aligned}
\left|
\widehat{\eta}_{ij}(b)
-
\eta_{ij}^{0}(b)
\right|
&=
\left|
\left\{
f_{\widehat{\theta}}(X_{ij})
+
Z_{ij}^{T}b
\right\}
-
\left\{
f_0(X_{ij})
+
Z_{ij}^{T}b
\right\}
\right| \\
&=
\left|
f_{\widehat{\theta}}(X_{ij})
-
f_0(X_{ij})
\right| \\
&\le
\sup_{x\in\mathcal X}
\left|
f_{\widehat{\theta}}(x)-f_0(x)
\right| \\
&=
o_p(1).
\end{aligned}
\]

Hence,

\[
\sup_{b\in\mathcal B}
\left|
\widehat{\eta}_{ij}(b)
-
\eta_{ij}^{0}(b)
\right|
=
o_p(1).
\]

Since the conditional log-likelihood is a continuous function of the linear predictor, the difference between the fitted and true log-likelihoods also converges to zero after normalization. Therefore,

\[
\sup_{b\in\mathcal B}
\left|
\frac{1}{n_i}\widehat{\ell}_i(b)
-
\frac{1}{n_i}\ell_i(b)
\right|
\xrightarrow{p}
0.
\]

By the identifiability assumption, the true random effect \(b_i\) uniquely maximizes the limiting objective function, that is,

\[
b_i
=
\arg\max_{b\in\mathcal B}
\ell_i(b).
\]

Since \(\widehat{\ell}_i(b)/n_i\) converges uniformly to \(\ell_i(b)/n_i\), the argmax continuous mapping theorem \citep{vandervaart1998} implies that

\[
\widehat b_i
=
\arg\max_{b\in\mathcal B}
\widehat{\ell}_i(b)
\xrightarrow{p}
\arg\max_{b\in\mathcal B}
\ell_i(b)
=
b_i.
\]

Thus,

\[
\widehat b_i
\xrightarrow{p}
b_i.
\]
\end{proof}

\begin{proof}[Proof of Theorem \ref{Thm2}]

From \eqref{Tpr} and \eqref{GPr},

\[
\eta_{ij}^{0}
=
f_{0}(X_{ij})
+
Z_{ij}^{T}b_i,
\qquad
\widehat{\eta}_{ij}
=
f_{\widehat{\theta}}(X_{ij})
+
Z_{ij}^{T}\widehat b_i.
\]

Therefore,

\[
\begin{aligned}
\widehat{\eta}_{ij}
-
\eta_{ij}^{0}
&=
\left\{
f_{\widehat{\theta}}(X_{ij})
+
Z_{ij}^{T}\widehat b_i
\right\}
-
\left\{
f_{0}(X_{ij})
+
Z_{ij}^{T}b_i
\right\} \\
&=
f_{\widehat{\theta}}(X_{ij})
-
f_{0}(X_{ij})
+
Z_{ij}^{T}
(\widehat b_i-b_i).
\end{aligned}
\]

Taking absolute values and applying the triangle inequality,

\[
\begin{aligned}
\left|
\widehat{\eta}_{ij}
-
\eta_{ij}^{0}
\right|
&\le
\left|
f_{\widehat{\theta}}(X_{ij})
-
f_{0}(X_{ij})
\right|
+
\left|
Z_{ij}^{T}
(\widehat b_i-b_i)
\right|.
\end{aligned}
\]

Since \(\mathcal{F}_{K}\) satisfies the universal approximation property and \(f_{\widehat{\theta}}\) is a consistent estimator of \(f_{0}\),

\[
\sup_{x\in\mathcal X}
\left|
f_{\widehat{\theta}}(x)
-
f_{0}(x)
\right|
=
o_p(1).
\]

Hence,

\[
\left|
f_{\widehat{\theta}}(X_{ij})
-
f_{0}(X_{ij})
\right|
\le
\sup_{x\in\mathcal X}
\left|
f_{\widehat{\theta}}(x)
-
f_{0}(x)
\right|
=
o_p(1).
\]

For the second term, by the Cauchy--Schwarz inequality,

\[
\begin{aligned}
\left|
Z_{ij}^{T}
(\widehat b_i-b_i)
\right|
&\le
\|Z_{ij}\|
\,
\|\widehat b_i-b_i\| \\
&\le
C_Z
\,
\|\widehat b_i-b_i\|.
\end{aligned}
\]

By Proposition~\ref{Prop1},

\[
\widehat b_i
\xrightarrow{p}
b_i,
\]

which implies

\[
\|\widehat b_i-b_i\|
=
o_p(1).
\]

Therefore,

\[
\left|
Z_{ij}^{T}
(\widehat b_i-b_i)
\right|
=
o_p(1).
\]

Combining the above results,

\[
\begin{aligned}
\left|
\widehat{\eta}_{ij}
-
\eta_{ij}^{0}
\right|
&\le
o_p(1)
+
o_p(1) \\
&=
o_p(1).
\end{aligned}
\]

Thus,

\[
\widehat{\eta}_{ij}
-
\eta_{ij}^{0}
=
o_p(1),
\]

or equivalently,

\[
\widehat{\eta}_{ij}
\xrightarrow{p}
\eta_{ij}^{0}.
\]

This completes the proof.

\end{proof}
\begin{proof}[Proof of Corollary~\ref{Cor1}]

From Theorem~\ref{Thm2} in \eqref{CP}, the estimated predictor satisfies

\[
\widehat{\eta}_{ij}
\xrightarrow{p}
\eta_{ij}^{0}.
\]

Since the inverse link function \(g^{-1}\) is continuous at
\(\eta_{ij}^{0}\), the Continuous Mapping Theorem
\citep{vandervaart1998} implies that

\[
g^{-1}(\widehat{\eta}_{ij})
\xrightarrow{p}
g^{-1}(\eta_{ij}^{0}).
\]

Substituting the definitions of
\(\widehat{\mu}_{ij}\) and \(\mu_{ij}^{0}\), we obtain

\[
\widehat{\mu}_{ij}
\xrightarrow{p}
\mu_{ij}^{0}.
\]

Therefore,

\[
\widehat{\mu}_{ij}
-
\mu_{ij}^{0}
=
o_p(1),
\]

or equivalently,

\[
|\widehat{\mu}_{ij}-\mu_{ij}^{0}|
=
o_p(1).
\]

\end{proof}
\begin{proof}[Proof of Theorem~\ref{Thm3}]

From the definitions in \eqref{eq:MSPE_GNMM} and
\eqref{eq:MSPE_Oracle},

\[
\begin{aligned}
MSPE_{\mathrm{GNMM}}
-
MSPE_{\mathrm{Oracle}}
&=
E\left[
\left(
Y_{ij}-\widehat{\mu}_{ij}
\right)^2
-
\left(
Y_{ij}-\mu_{ij}^{0}
\right)^2
\right].
\end{aligned}
\]

Since
\[
\mu_{ij}^{0}
=
E(Y_{ij}\mid X_{ij},Z_{ij},b_i),
\]
we have
\[
E\left[
Y_{ij}-\mu_{ij}^{0}
\mid
X_{ij},Z_{ij},b_i
\right]
=
0.
\]

Using
\[
Y_{ij}-\widehat{\mu}_{ij}
=
(Y_{ij}-\mu_{ij}^{0})
-
(\widehat{\mu}_{ij}-\mu_{ij}^{0}),
\]
and taking expectations, the cross-product term vanishes. Therefore,

\[
MSPE_{\mathrm{GNMM}}
-
MSPE_{\mathrm{Oracle}}
=
E\left[
\left(
\widehat{\mu}_{ij}
-
\mu_{ij}^{0}
\right)^2
\right]
\geq 0.
\]

Since \(g^{-1}\) is Lipschitz continuous with constant \(L_g\),

\[
\left|
\widehat{\mu}_{ij}
-
\mu_{ij}^{0}
\right|
\leq
L_g
\left|
\widehat{\eta}_{ij}
-
\eta_{ij}^{0}
\right|.
\]

From the predictor-error decomposition in Theorem~\ref{Thm2},

\[
\begin{aligned}
\widehat{\eta}_{ij}-\eta_{ij}^{0}
&=
\left[
f_{\widehat{\theta}}(X_{ij})
-
f_{\theta_K^{*}}(X_{ij})
\right]
+
\left[
f_{\theta_K^{*}}(X_{ij})
-
f_0(X_{ij})
\right]
\\
&\quad+
Z_{ij}^{T}(\widehat b_i-b_i).
\end{aligned}
\]

Hence, using
\(\sup_{i,j}\|Z_{ij}\|\leq C_Z\),

\[
\begin{aligned}
\left|
\widehat{\eta}_{ij}
-
\eta_{ij}^{0}
\right|
&\leq
\sup_{x\in\mathcal X}
\left|
f_{\widehat{\theta}}(x)
-
f_{\theta_K^{*}}(x)
\right|
\\
&\quad+
\sup_{x\in\mathcal X}
\left|
f_{\theta_K^{*}}(x)
-
f_0(x)
\right|
+
C_Z\|\widehat b_i-b_i\|.
\end{aligned}
\]

Using the inequality
\[
(a+b+c)^2
\leq
3(a^2+b^2+c^2),
\]
we obtain

\[
\begin{aligned}
\left(
\widehat{\mu}_{ij}
-
\mu_{ij}^{0}
\right)^2
&\leq
3L_g^2
\Bigg\{
\sup_{x\in\mathcal X}
\left|
f_{\widehat{\theta}}(x)
-
f_{\theta_K^{*}}(x)
\right|^2
\\
&\qquad+
\sup_{x\in\mathcal X}
\left|
f_{\theta_K^{*}}(x)
-
f_0(x)
\right|^2
+
C_Z^2
\|\widehat b_i-b_i\|^2
\Bigg\}.
\end{aligned}
\]

Taking expectations on both sides gives

\[
\begin{aligned}
0
&\leq
MSPE_{\mathrm{GNMM}}
-
MSPE_{\mathrm{Oracle}}
\\
&\leq
3L_g^2
\Bigg\{
E\left[
\sup_{x\in\mathcal X}
\left|
f_{\widehat{\theta}}(x)
-
f_{\theta_K^{*}}(x)
\right|^2
\right]
\\
&\qquad+
\sup_{x\in\mathcal X}
\left|
f_{\theta_K^{*}}(x)
-
f_0(x)
\right|^2
+
C_Z^2
E\left[
\|\widehat b_i-b_i\|^2
\right]
\Bigg\}.
\end{aligned}
\]

Under the conditions of Theorem~\ref{Thm3}, each term on the right-hand
side converges to zero. Therefore,

\[
MSPE_{\mathrm{GNMM}}
-
MSPE_{\mathrm{Oracle}}
\longrightarrow 0.
\]

This completes the proof.

\end{proof}



\end{appendices}


\end{document}